\pgfplotsset{compat=1.16}
\definecolor{cycle1}{RGB}{0,201,221}
\definecolor{cycle2}{RGB}{193,27,131}
\definecolor{cycle3}{RGB}{0,84,0}
\definecolor{cycle4}{RGB}{239,142,43}
\definecolor{cycle5}{RGB}{0,131,221}
\definecolor{cycle6}{RGB}{0,190,128}
\definecolor{cycle7}{RGB}{166,19,14} 
\definecolor{cycle8}{RGB}{233,133,243}
\definecolor{cycle9}{RGB}{133,126,0}
\definecolor{cycle10}{RGB}{68,61,172}
\newtheorem{theorem}{Theorem}[section]
\newtheorem{claim}[theorem]{Claim}
\newtheorem{observation}[theorem]{Observation}
\newtheorem{lemma}[theorem]{Lemma}
\newtheorem{corollary}[theorem]{Corollary}
\newtheorem{definition}[theorem]{Definition}
\newtheorem{fact}[theorem]{Fact}
\newtheorem{remark}[theorem]{Remark}
\newcommand{\p}{\mathbf{p}}
\newcommand{\lap}{\mathrm{Lap}}
\newcommand{\tp}{\textit{type}}
\newcommand{\joint}{\textit{joint}\xspace}
\newcommand{\njoint}{\textit{non-joint}\xspace}
\newcommand{\vol}{\mathrm{Vol}}
\renewcommand{\hat}{\widehat}
\newif\ifappendix
\newcommand{\refappendix}[1]{\ifappendix
 Appendix~\ref{#1}\xspace
\else
 the supplementary material\xspace
\fi}
\newif\ifquestionnaire
\title{Differentially Private Graph Learning via Sensitivity-Bounded Personalized PageRank}
\newcommand*\samethanks[1][\value{footnote}]{\footnotemark[#1]}
\author{Alessandro Epasto\thanks{Google Research. \{aepasto, mirrokni, bperozzi, tsitsulin, peilinz\}@google.com} \And Vahab Mirrokni\samethanks \And Bryan Perozzi\samethanks \AND
Anton Tsitsulin\samethanks \And Peilin Zhong\samethanks}
\begin{document}

\maketitle

\begin{abstract}
Personalized PageRank (PPR) is a fundamental tool in unsupervised learning of graph representations such as node ranking, labeling, and graph embedding. However, while data privacy is one of the most important recent concerns, existing PPR algorithms are not designed to protect user privacy. PPR is highly sensitive to the input graph edges: the difference of only one edge may cause a large change in the PPR vector, potentially leaking private user data.

In this work, we propose an algorithm which outputs an approximate PPR and has provably bounded sensitivity to input edges. In addition, we prove that our algorithm achieves  similar accuracy to non-private algorithms when the input graph has large degrees. Our sensitivity-bounded PPR directly implies private algorithms for several tools of graph learning, such as, differentially private (DP) PPR ranking, DP node classification, and DP node embedding. To complement our theoretical analysis, we also empirically verify the practical performances of our algorithms.

\end{abstract}

\vspace{-0.2in}
\section{Introduction}
\vspace{-0.1in}

Personalized PageRank (PPR)~\cite{haveliwala2003topic}, has been a workhorse of graph mining and learning for the past twenty years. Given a graph $G$, and a source node $s$, the PPR vector of node $s$ defines a notion of proximity of the other nodes in the graph to it. More precisely, the proximity of $s$ to $v$, is defined by the probability that a biased random walk starting in $s$, visits node $v$. 

This elegant variation of the celebrated PageRank algorithm~\cite{page1999pagerank} has found widespread use in different applications, including web search~\cite{cho2007rankmass}, link prediction~\cite{liben2007link}, network analysis~\cite{ivan2011web,gleich2015pagerank}, graph clustering~\cite{andersen2007using}, natural language processing~\cite{scozzafava2020personalized}, spam and fake account detection~\cite{alvisi13sok,AndersenBCHJMT08}.
More recently,  PPR has also been used in  
graph neural networks~\cite{klicpera2018predict} and graph representation learning~\cite{postuavaru2020instantembedding} including to speed up the computation of graph-based learning algorithms~\cite{bojchevski2019pagerank,bojchevski2020scaling}. 

Despite the widespread use of PPR, and the extensive algorithmic literature dedicated to its efficient approximation~\cite{bahmani2011fast,andersen2007using,fogaras2005towards,hou2021massively}, to the best of our knowledge no prior work has attempted to compute PPR vectors in a privacy-preserving manner. 
In this work, we address this limitation by defining the first approximate method for PPR computation with differential privacy~\cite{dwork2006calibrating}.
We focus on a standard notion of differential privacy (DP) for graphs known as edge-level DP. In this notion, two unweighted, undirected graphs $G=(V,E)$ and $G'=(V,E')$, are deemed neighbors if they differ only in the presence of a single edge~\cite{marc2021differentially,eliavs2020differentially}. An algorithm $\mathcal{A}$ is then said to be edge-level $\epsilon$-differentially private if the difference in the probability of observing any particular outcome from the algorithm when run on $G$ vs $G'$ is bounded:
$\Pr[\mathcal{A}(G) \in S] \leq e^{\epsilon}\cdot \Pr[\mathcal{A}(G') \in S].$

Edge-level DP guarantees a strong notion of plausible deniability for the existence of an edge in the graph. This is especially critical in graph-based learning applications, where nodes correspond to humans, and edges depend on personal relationships, which can be highly private and sensitive. Achieving edge-level DP ensures that an attacker observing the output of the algorithm is information-theoretically bounded in their ability to uncover any specific user pair connection.

On top of this notion, we also explore a popular variation of differential privacy  used in personalization~\cite{jain2021differentially} that follows the concept of joint differential privacy~\cite{kearns2014mechanism}. 
More precisely we provide Personalized PageRank algorithms that are joint, edge-level differentially-private with respect to the neighborhood of the source node $s$. This means that we can provide the user corresponding to $s$, with an approximate Personalized PageRank of $s$ that depends on the edges incident to $s$ but that protects the information on edges of the rest of the graph.
This notion is especially relevant in the context of personalization of results in social networks using PPR, where user data can be safely used 
to provide an output to the user, but must be protected from leaking to others.

\vspace{-0.1in}
\subsection{Our results and outline of the paper}
\vspace{-0.1in}

Differential privacy forces an algorithm to be insensitive to changes of an edge in the graph. This makes the design of DP PPR algorithms especially challenging as a single edge removal or addition may result in dramatically different PPR vectors, thus potentially exposing private user data. 
Our first contribution is to propose an algorithm that approximates the PPR vector with a provably bounded sensitivity to edge changes. 
This technical contribution directly leads to the design of edge-level DP (and joint edge-level DP) algorithms for computing approximate PPR vectors. We believe that this technique may have broader applications in the design of DP graph algorithms.

From a theoretical standpoint, 
we prove that our private algorithms achieve similar accuracy as non-private approximation algorithms when the input graph has a large enough minimum degree (while the privacy guarantee holds for all graphs of any degree). This dependency on the degree is theoretically justified as we show non-trivial approximation requires large enough degrees.

The main ingredient of our DP algorithms is a novel (non-private) sensitivity-bounded approximate PPR algorithm (Algorithm~\ref{alg:capped_push_flow}).
For any input parameter $0<\sigma<1$, the  sensitivity of the output of the algorithm in the (joint) edge-level DP case is always upper bounded by $\sigma$. In addition, we show that the algorithm has an $O(\sigma)$ additive error to the ground truth PPR when the minimum degree of the graph is $\Omega(1/\sigma)$ in the DP case (resp. $\Omega(\sqrt{1/\sigma})$ in the joint-DP case).
We show that this requirement on the minimum degree for approximation guarantees is almost tight due to hard instances that we present in~\refappendix{sec:tight_bound_of_sensitivity_of_ppr}.
We then use our sensitivity-bounded algorithm to obtain an edge-level DP  (resp. joint edge-level DP) algorithm that, for a graph of minimum degree at least $D$, has $O(1/D)$ additive error (resp. $O(1/D^2)$ error).
Then, we focus on applications of differentially-private PPR including computing 
graph embeddings. We provide provably edge-level DP and joint edge-level DP graph embedding algorithm in Section~\ref{sec:DP_instant_embedding}.
Finally, in Section~\ref{sec:exp}, we empirically evaluate the performance of our differentially private PPR rankings, as well as that of the embedding methods we design, in several down-stream graph-learning tasks such as node ranking and classification.

To the best of our knowledge, our paper presents the first algorithm with theoretical guarantees for differentially private PPR. This contributes to the still quite short list of private graph algorithms with provable approximation guarantees that have been developed so far~\cite{marc2021differentially,eliavs2020differentially,fichtenberger2021differentially,zhang2021differentially,upadhyay2021differentially}, 
\vspace{-0.2in}
\section{Preliminaries}
\vspace{-0.1in}
We consider an undirected and unweighted graph $G$ with node set $V=\{v_1,v_2,\cdots,v_n\}$ and edge set $E$. 
Let $A$ be the adjacency matrix where $A_{i,j}=1$ indicates an edge between $v_i$ and $v_j$, and $A_{i,j}=0$ otherwise.
Let $\Lambda$ be the diagonal matrix where $\Lambda_{i,i}=d(v_i)$ denotes the degree of $v_i$.
We use $\mathbf{0}^n$ to denote an $n$-dimensional all-zero vector and use $e_i\in\mathbb{R}^n$ to denote the one-hot vector where the $i$-th entry is $1$ and other entries are $0$.
Let $[n]$ denote the set $\{1,2,\cdots, n\}$.
When there is no ambiguity, we sometimes abuse the notation between $[n]$ and $V$, i.e., using $v\in[n]$ to denote a node or $i\in V$ denotes an index between $1$ and $n$.
For $x\in\mathbb{R}^k$, we denote $\|x\|_1=\sum_{i\in[k]}|x_i|$ and $\|x\|_{\infty}=\max_{i\in [k]} |x_i|$.
We use $\lap(b)$ $(b>0)$ to denote the Laplace distribution with density function $f(x)=\frac{1}{2b}\cdot \exp(-|x|/b)$.
The cumulative distribution function of $\lap(b)$ is
$
F(x)=\left\{\begin{array}{ll} \frac{1}{2}\cdot \exp(x/b), & x<0, \\ 1-\frac{1}{2}\cdot \exp\left(-x/b\right), & x\geq 0. \end{array}\right.
$.
We will use the following fact in our analysis.
\begin{fact}\label{fac:laplace_bound}
Consider $Y$ drawn from $\lap(b)$.
For any $\delta\in(0,1)$, $\Pr[|Y|>b\ln(1/\delta)]=\delta$.
\end{fact}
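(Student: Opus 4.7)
The plan is to compute the tail probability $\Pr[|Y|>t]$ in closed form using the CDF $F$ that is already stated in the excerpt, and then substitute $t = b\ln(1/\delta)$. Since the Laplace density $f(x) = \frac{1}{2b}\exp(-|x|/b)$ is symmetric about zero, for any $t \geq 0$ I would write
\[
\Pr[|Y| > t] \;=\; 2\Pr[Y > t] \;=\; 2\bigl(1 - F(t)\bigr).
\]

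Next I would plug in the $x \geq 0$ branch of $F$, which gives $1 - F(t) = \tfrac{1}{2}\exp(-t/b)$, and hence $\Pr[|Y| > t] = \exp(-t/b)$. Substituting $t = b\ln(1/\delta)$ (which is nonnegative because $\delta \in (0,1)$ forces $\ln(1/\delta) > 0$, so the $x \geq 0$ branch of $F$ applies) yields
\[
\Pr[|Y| > b\ln(1/\delta)] \;=\; \exp(-\ln(1/\delta)) \;=\; \delta,
\]
which is exactly the claim.

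There is essentially no obstacle here: this is a routine tail bound for the Laplace distribution and follows in one line from the CDF already written down. The only mild point to check is the range of $\delta$, which ensures $b\ln(1/\delta) \geq 0$ so that we are allowed to apply the second branch of $F$ directly. Alternatively, one could bypass the CDF and integrate the density on the two tails, $\int_{|x|>t} f(x)\,dx = 2\int_t^{\infty} \tfrac{1}{2b}\exp(-x/b)\,dx = \exp(-t/b)$, arriving at the same conclusion.
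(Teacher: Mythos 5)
Your proof is correct and is exactly the standard one-line computation from the CDF that the paper writes down immediately before stating the Fact; the paper itself offers no proof (it is stated as a known Fact), and your argument is the natural way to verify it. The observation that $\delta\in(0,1)$ guarantees $b\ln(1/\delta)\geq 0$, so the $x\geq 0$ branch of $F$ applies, is the only detail worth checking and you handled it.
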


\vspace{-0.1in}
\subsection{Personalized PageRank}\label{sec:preli_ppr}
\vspace{-0.1in}
Personalized PageRank (PPR) takes an input distribution of starting nodes $s\in\mathbb{R}^n$, and starts a lazy random walk with teleport probability $\alpha\in(0,1)$. 
Typically, $s=e_i$ for some $i\in[n]$, which enforces the random walk starting from the source node $v_i$.
If there is no ambiguity, we abuse the notation to denote with $s$ the source node.
The output PPR vector is the stationary distribution of the random walk.
Precisely, let $W=\frac{1}{2}(I+\Lambda^{-1}A)$ be the lazy random walk transition matrix.\footnote{This is equivalent to the standard random walk matrix up to a change in $\alpha$ (see \cite{andersen2007using}). We use the lazy walk for consistency with prior work.}
The PPR vector $\p(s)$ is defined recursively as: $\p(s)=\alpha\cdot s + (1-\alpha)\cdot \p(s)\cdot W$. 
In many applications, it is good enough to use approximate PPR vectors and their variants.
\begin{definition}[$\xi$-approximate PPR, e.g.,~\cite{andersen2007using}]\label{def:xi_approx_ppr}
For $\xi>0$, a $\xi$-approx. PPR vector for $\p(s)$ is a PPR vector $\p(s-r)$ where $r$ (residual vector) is a non-negative $n$-dimensional and $\|r\|_{\infty}\leq \xi$.
\end{definition}
\begin{definition}[$(\xi,\eta)$-approximate PPR]
For $\xi,\eta >0$, a $(\xi,\eta)$-approximate PPR vector $p$ for $\p(s)$  satisfies $\|p-\p(s-r)\|_{\infty}\leq \eta$ where $r$ is a non-negative $n$-dimensional vector and $\|r\|_{\infty}\leq \xi$.
\end{definition}
Note that $\xi$ denotes the error introduced by the residual and $\eta$ denotes the error introduced by the PPR vector itself. 
Two types of errors are well studied in the literature: for residual error $\xi$, see e.g.,~\cite{andersen2007using}; for PPR vector error $\eta$, see e.g.,~\cite{hou2021massively}.

\vspace{-0.1in}
\subsection{Differential Privacy}
\vspace{-0.1in}
We consider edge-level DP and joint edge-level DP for graph algorithms.
Given a graph $G$, we denote $\Gamma(G)$ as the set of all neighboring graphs of $G$, i.e., $\forall G'\in\Gamma(G)$, $G'$ can be obtained from $G$ by adding or removing an edge.

\begin{definition}[Edge-level DP~\cite{eliavs2020differentially} and joint edge-level DP~\cite{kearns2014mechanism}]
A randomized graph algorithm $\mathcal{A}$ is edge-level $\varepsilon$-DP if for any input graphs $G$, $G'$ satisfying $G'\in\Gamma(G)$, and for any subset $S$ of possible outputs of $\mathcal{A}$, it holds $\Pr[\mathcal{A}(G)\in S]\leq e^{\varepsilon}\cdot \Pr[\mathcal{A}(G')\in S].$
Let $V$ be the set of $n$ nodes (users).
For joint edge-level DP we assume that a family of $n$ (personalized) graph algorithms $\mathcal{A} = ( \mathcal{A}_1,\mathcal{A}_2,\cdots,\mathcal{A}_n)$ is run on the graph and the output of $\mathcal{A}_i$ is provided only to user (node) $v_i\in V$. The family $\mathcal{A}$ is joint edge-level $\varepsilon$-DP, if for any $x,y\in V$ and for any two neighboring graphs $G,G'$ that only differ on edge $(x,y)$, for any $v\not=x,y$ and for any subset $S$ of possible outputs of $\mathcal{A}_v$, it always holds $\Pr[\mathcal{A}_v(G)\in S]\leq e^{\varepsilon}\cdot \Pr[\mathcal{A}_v(G')\in S]$.\footnote{
Note this is slightly weaker than the original definition of joint DP which asks for $\forall$ subsets $S_1,S_2,\cdots,S_n$ of possible outputs of $\mathcal{A}_1,\mathcal{A}_2,\cdots,\mathcal{A}_n$ respectively, $\Pr[\mathcal{A}_{-x,-y}(G)\in S_{-x,-y}]\leq e^{\varepsilon}\cdot\Pr[\mathcal{A}_{-x,-y}(G')\in S_{-x,-y}]$, where $\mathcal{A}_{-x,-y}$ is the tuple of $n-2$ outputs of $\mathcal{A}_1,\mathcal{A}_2,\cdots,\mathcal{A}_n$ except $\mathcal{A}_x,\mathcal{A}_y$, and $S_{-x,-y}$ is the cartesian product of $S_1,S_2,\cdots,S_n$ except $S_x,S_y$. 

Our definition protects the privacy of each user, however, as we assume that output of $\mathcal{A}_i$ is available to (node) $v_i\in V$ only.}
\end{definition}
For $s\in V$, let us denote $\Gamma_s(G)$ as the set of graphs $G'\in \Gamma(G)$ satisfying that $G'$ and $G$ differ on an edge that is {\it not} incident to $s$.
It is easy to verify that the joint edge-level $\varepsilon$-DP is equivalent to asking for $\forall s\in V,$ $\forall$ subset $S$ of possible outputs of $\mathcal{A}_s$, $\forall G,G'\in\Gamma_s(G)$, $\Pr[\mathcal{A}_s(G)\in S]\leq e^{\varepsilon}\cdot \Pr[\mathcal{A}_s(G')\in S]$.
In the remainder of the paper, we will simply call edge-level $\varepsilon$-DP as $\varepsilon$-DP and joint edge-level $\varepsilon$-DP as joint $\varepsilon$-DP.

In the following, we will briefly review several other related definitions and theorems for DP.
\begin{definition}[Sensitivity~\cite{dwork2006calibrating}]\label{def:sensitivity}
Consider a function $f$ whose input is a graph and whose output is in $\mathbb{R}^k$. 
The sensitivity $S_f$ is defined as $S_f=\max_{G,G':G'\in\Gamma(G)}\|f(G)-f(G')\|_1$.
Consider a family $\mathcal{F}$ of functions $f_1,f_2,\cdots,f_n$ where each takes a graph as input and outputs a vector in $\mathbb{R}^k$.
The joint sensitivity $S_{\mathcal{F}}$ is defined as $S_{\mathcal{F}}=\max_{s\in[n],G,G'\in\Gamma_s(G)}\|f_s(G)-f_s(G')\|_1$.
\end{definition}

Moreover, when this simplifies the presentation, we will refer to $\epsilon$-DP and sensitivity as {\it non-joint} DP and {\it non-joint} sensitivity to oppose them to {\it joint} DP and {\it  joint} sensitivity.

\begin{theorem}[Laplace mechanism~\cite{dwork2006calibrating}]\label{thm:laplace_mechanism}
Consider a function $f$ whose input is a graph and whose output is in $\mathbb{R}^k$.
Suppose $f$ has sensitivity $S_f$.
Then the algorithm $\mathcal{A}(G)$ which outputs $f(G)+(Y_1,Y_2,\cdots,Y_k)$ is $\varepsilon$-DP where $Y_i$ are independent $\lap(S_f/\varepsilon)$ random variables.
Similarly, consider a family $\mathcal{F}$ of functions $f_1,f_2,\cdots,f_n$ with joint sensitivity $S_\mathcal{F}$.
Then the family of $\mathcal{A}_1,\mathcal{A}_2,\cdots,\mathcal{A}_n$ is joint $\varepsilon$-DP where $\mathcal{A}_i(G)$ outputs $f_i(G)+(Y_{i,1},Y_{i,2},\cdots,Y_{i,k})$, and $Y_{i,j}$ are independent $\lap(S_{\mathcal{F}}/\varepsilon)$ random variables.
\end{theorem}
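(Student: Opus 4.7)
The plan is to reduce both statements to the standard density-ratio argument for the Laplace mechanism. First I would fix neighboring graphs $G, G' \in \Gamma(G)$ and an arbitrary measurable set $S \subseteq \mathbb{R}^k$, then write out the density of $\mathcal{A}(G) = f(G) + (Y_1,\ldots,Y_k)$ at an arbitrary point $z\in\mathbb{R}^k$, which by independence factors as
\[
p_{\mathcal{A}(G)}(z) = \prod_{i=1}^{k} \frac{1}{2b}\exp\!\left(-\frac{|z_i - f(G)_i|}{b}\right), \qquad b = S_f/\varepsilon.
\]
Forming the ratio $p_{\mathcal{A}(G)}(z)/p_{\mathcal{A}(G')}(z)$, the normalization constants cancel and the exponents combine into $\frac{1}{b}\sum_{i}\bigl(|z_i - f(G')_i| - |z_i - f(G)_i|\bigr)$.

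Next I would apply the reverse triangle inequality coordinatewise to bound each term by $|f(G)_i - f(G')_i|$, so that the sum is at most $\|f(G)-f(G')\|_1 \le S_f$ by the definition of sensitivity (Definition~\ref{def:sensitivity}). Plugging in $b = S_f/\varepsilon$ yields a pointwise bound of $e^\varepsilon$ on the density ratio. Integrating this inequality against the indicator of $S$ gives $\Pr[\mathcal{A}(G)\in S] \le e^\varepsilon \Pr[\mathcal{A}(G')\in S]$, establishing the first part.

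For the joint version, I would fix an arbitrary user $s\in V$ and any pair $G, G'\in\Gamma_s(G)$, and repeat the identical density calculation for $\mathcal{A}_s$, now with noise scale $b = S_\mathcal{F}/\varepsilon$. The only change is that the coordinate-wise sum is bounded using $\|f_s(G)-f_s(G')\|_1 \le S_\mathcal{F}$, which holds by the definition of joint sensitivity since $(G,G')\in\Gamma_s(G)$. This yields $\Pr[\mathcal{A}_s(G)\in S] \le e^\varepsilon \Pr[\mathcal{A}_s(G')\in S]$ for every such $s$, $G$, $G'$, and $S$, which by the reformulation noted after the DP definition is exactly joint edge-level $\varepsilon$-DP.

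There is no real obstacle here; the only subtlety worth flagging is that the joint case does \emph{not} require composition across the $\mathcal{A}_i$ because the relaxed definition adopted in this paper only demands the per-user marginal guarantee, so each $\mathcal{A}_s$ can be analyzed in isolation using the non-joint argument with $S_\mathcal{F}$ in place of $S_f$.
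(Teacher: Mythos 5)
Your proof is correct, and it is the standard density-ratio argument for the Laplace mechanism; the paper itself does not supply a proof of this theorem, instead citing it as a known result from~\cite{dwork2006calibrating}. You correctly observe the one non-routine point: under the paper's relaxed definition of joint DP (only per-user marginal guarantees, since $\mathcal{A}_i$'s output is revealed only to $v_i$), no composition across the $\mathcal{A}_i$ is needed, and each $\mathcal{A}_s$ can be handled in isolation with $S_\mathcal{F}$ replacing $S_f$.
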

\begin{theorem}[Composition~\cite{dwork2014algorithmic}]\label{thm:adv_comp}
Consider two algorithms $\mathcal{A}_1:\mathcal{X}\rightarrow \mathcal{Y},\mathcal{A}_2:\mathcal{Y}\times\mathcal{X}\rightarrow \mathcal{Z}$.
Suppose $\mathcal{A}_1(\cdot)$ is $\varepsilon_1$-DP, and $\mathcal{A}_2(Y,\cdot)$ is $\varepsilon_2$-DP for any given $Y\in\mathcal{Y}$.
Then the algorithm $\mathcal{A}_3:\mathcal{X}\rightarrow\mathcal{Z}$ which is defined as $\mathcal{A}_3(X)=\mathcal{A}_2(\mathcal{A}_1(X),X)$ is $(\varepsilon_1+\varepsilon_2)$-DP.
\end{theorem}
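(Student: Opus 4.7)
The plan is to prove basic composition by conditioning on the intermediate value produced by $\mathcal{A}_1$ and attacking the two resulting factors with the two separate DP guarantees. Fix any pair of neighboring inputs $X, X' \in \mathcal{X}$ and any measurable output set $S \subseteq \mathcal{Z}$; the goal is $\Pr[\mathcal{A}_3(X) \in S] \leq e^{\varepsilon_1+\varepsilon_2}\Pr[\mathcal{A}_3(X') \in S]$.

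First I would express each side as an integral over the intermediate output. Letting $\mu_X$ denote the distribution of $\mathcal{A}_1(X)$ on $\mathcal{Y}$, and using that the internal randomness of $\mathcal{A}_1$ is independent of that of $\mathcal{A}_2$, we have $\Pr[\mathcal{A}_3(X) \in S] = \int_{\mathcal{Y}} \Pr[\mathcal{A}_2(Y, X) \in S]\, d\mu_X(Y)$, and analogously for $X'$. This decomposition is the crux of the argument because it cleanly isolates the two algorithms into independent factors of the integrand. Next I would apply $\varepsilon_2$-DP of $\mathcal{A}_2(Y,\cdot)$ pointwise inside the integral, which gives $\Pr[\mathcal{A}_2(Y, X) \in S] \leq e^{\varepsilon_2}\Pr[\mathcal{A}_2(Y, X') \in S]$ for every fixed $Y$; and then I would replace $d\mu_X$ by $e^{\varepsilon_1}\, d\mu_{X'}$ using $\varepsilon_1$-DP of $\mathcal{A}_1$. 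Multiplying the two factors and pulling constants through the integral yields $e^{\varepsilon_1+\varepsilon_2}\Pr[\mathcal{A}_3(X') \in S]$, as desired.

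The main obstacle is that $\varepsilon_1$-DP is originally stated as a set-wise inequality over measurable subsets of $\mathcal{Y}$, whereas the integral substitution requires a pointwise Radon-Nikodym bound $\frac{d\mu_X}{d\mu_{X'}}(Y) \leq e^{\varepsilon_1}$. In the discrete case this is immediate by taking singletons $S = \{Y\}$ in the DP definition. In the general measure-theoretic case one shows $\mu_X$ is absolutely continuous with respect to $\mu_{X'}$ and bounds the Radon-Nikodym derivative by considering the set where it exceeds $e^{\varepsilon_1}$ and deriving a contradiction with the DP inequality; this is standard but worth a one-line remark. Since in this paper $\mathcal{A}_1$ will be either the Laplace mechanism or a discrete mechanism, the density-ratio step is routine, and the expected difficulty is purely notational bookkeeping rather than a genuine technical barrier.
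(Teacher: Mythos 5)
The paper does not prove this statement; it is quoted verbatim as a known result and attributed to the textbook of Dwork and Roth \cite{dwork2014algorithmic}, so there is no in-paper proof to compare against. That said, your argument is the standard one and it is correct: you condition on the intermediate output $Y\sim\mathcal{A}_1(X)$, use independence of the two algorithms' internal coins to write $\Pr[\mathcal{A}_3(X)\in S]=\int_{\mathcal Y}\Pr[\mathcal{A}_2(Y,X)\in S]\,d\mu_X(Y)$, bound the integrand pointwise by $e^{\varepsilon_2}\Pr[\mathcal{A}_2(Y,X')\in S]$, and then move from $\mu_X$ to $e^{\varepsilon_1}\mu_{X'}$.

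One small remark on the step you flag as the main obstacle. You do not actually need to pass through the Radon--Nikodym derivative. The set-wise guarantee $\mu_X(A)\leq e^{\varepsilon_1}\mu_{X'}(A)$ for all measurable $A$ directly gives $\int g\,d\mu_X\leq e^{\varepsilon_1}\int g\,d\mu_{X'}$ for every nonnegative measurable $g$: the inequality holds for indicators by hypothesis, extends to nonnegative simple functions by linearity, and then to general nonnegative $g$ by monotone convergence. Applying this with $g(Y)=\Pr[\mathcal{A}_2(Y,X')\in S]$ (which is bounded and measurable in $Y$) finishes the proof with no appeal to absolute continuity or to a density bound. Your RN route reaches the same place, but it adds machinery (existence of the density, handling of the a.e.\ exceptional set where the derivative might exceed $e^{\varepsilon_1}$) that the simple-function argument sidesteps entirely. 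Neither route is wrong; the latter is just the more economical textbook path.
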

\vspace{-0.1in}
\section{Warm-Up: Push-Flow on Graphs with High Degrees}
\vspace{-0.1in}
As a warm-up, let us start with a standard push-flow algorithm for PPR~\cite{andersen2007using} and provide a novel analysis for bounding the sensitivity when each node has degree at least $D$. 
The non-private push-flow algorithm is described in Algorithm~\ref{alg:standard_push_flow}.\footnote{The algorithm described in Algorithm~\ref{alg:standard_push_flow} is slightly different from the original push-flow algorithm of \cite{andersen2007using}. 
 In each iteration, instead of pushing flow for the node with the largest residual, we push flows for all nodes that were visited.
 This variant gives us a better bound of sensitivity.
}
\begin{algorithm}[h]
	\footnotesize
	\begin{algorithmic}[1]\caption{\textsc{PushFlow}$(G,s,\alpha,\xi)$}\label{alg:standard_push_flow}
	\STATE \textbf{Input:} Graph $G=(V,E)$, source node $s\in V$, teleport probability $\alpha$, precision $\xi$.
	\STATE \textbf{Output:} Approximate PPR vector for $\p(s)$.
	\STATE Initialize $S^{(0)}\gets\{s\}$,$p^{(0)}\gets \mathbf{0}^n$, $r^{(0)}\gets e_s$, and
	$R\gets \lceil \ln(1/\xi)/\alpha \rceil$.
	\FOR{$i:=1\rightarrow R$}
	    \STATE Let $S^{(i)}\gets S^{(i-1)}.$ Let $p^{(i)},r^{(i)}\gets \mathbf{0}^n$.
	    \FOR{Each node $v\in S^{(i-1)}$}
	        \STATE $p^{(i)}_v\gets p^{(i-1)}_v+\alpha\cdot r^{(i-1)}_v$,  $r^{(i)}_v\gets r^{(i)}_v+(1-\alpha)/2\cdot r^{(i-1)}_v$
	        \STATE For each neighbor $u$, i.e., $(v,u)\in E$: 
	        $r^{(i)}_u\gets r^{(i)}_u+(1-\alpha)/2\cdot r^{(i-1)}_v/d(v)$,
	        $S^{(i)}\gets S^{(i)}\cup \{u\}$.
	    \ENDFOR
	\ENDFOR
	\STATE Output $p^{(R)}$.
	\end{algorithmic}
\end{algorithm}

\begin{lemma}\label{lem:correctness_of_standard_ppr}
Algorithm~\ref{alg:standard_push_flow} outputs a $\xi$-approximate PPR vector in $O(|E|\log(1/\xi)/\alpha)$ time.
\end{lemma}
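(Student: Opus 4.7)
The plan is to follow the standard push-flow analysis, adapted to the variant in Algorithm~\ref{alg:standard_push_flow} where every visited node is pushed in each iteration. I would first identify the invariant $p^{(i)} + \p(r^{(i)}) = \p(s)$ for all $i \in \{0,1,\ldots,R\}$. Combined with $r^{(i)} \geq \mathbf{0}^n$, this invariant together with the bound $\|r^{(R)}\|_\infty \leq \xi$ gives exactly the $\xi$-approximate PPR definition (since $\p(s-r^{(R)}) = \p(s) - \p(r^{(R)}) = p^{(R)}$ by linearity of $\p$).

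To establish the invariant, I would first observe that the update rules in lines 7–8 can be written compactly in matrix form as
\[
p^{(i)} = p^{(i-1)} + \alpha\, r^{(i-1)}, \qquad r^{(i)} = (1-\alpha)\, r^{(i-1)} W,
\]
where $W = \tfrac{1}{2}(I + \Lambda^{-1}A)$ is the lazy random-walk matrix (the term $\tfrac{1-\alpha}{2}r^{(i-1)}_v$ kept at $v$ combines with the $\tfrac{1-\alpha}{2}r^{(i-1)}_v/d(v)$ sent to each neighbor into exactly $(1-\alpha)r^{(i-1)}W$). The base case $p^{(0)} + \p(r^{(0)}) = \mathbf{0}^n + \p(e_s) = \p(s)$ is immediate. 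For the inductive step, I would use the recursion $\p(r) = \alpha r + (1-\alpha)\p(r)W$ together with the easy identity $\p(r)W = \p(rW)$ (both equal $\alpha\sum_{k\geq 0}(1-\alpha)^k r W^{k+1}$) to get
\[
p^{(i-1)} + \p(r^{(i-1)}) = p^{(i-1)} + \alpha r^{(i-1)} + (1-\alpha)\p(r^{(i-1)}W) = p^{(i)} + \p(r^{(i)}).
\]

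Next, I would bound $\|r^{(i)}\|_\infty$. Since $W$ is row-stochastic for an undirected graph and $r^{(i-1)}$ is entry-wise non-negative, the vector $r^{(i)} = (1-\alpha)r^{(i-1)}W$ is also non-negative and satisfies $\|r^{(i)}\|_1 = (1-\alpha)\|r^{(i-1)}\|_1$. By induction, $\|r^{(R)}\|_1 = (1-\alpha)^R \|r^{(0)}\|_1 = (1-\alpha)^R \leq e^{-\alpha R} \leq \xi$ for $R = \lceil \ln(1/\xi)/\alpha \rceil$, hence $\|r^{(R)}\|_\infty \leq \|r^{(R)}\|_1 \leq \xi$.

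Finally, for the running time, I would bound the work per iteration by $O(\sum_{v \in S^{(i-1)}}(1 + d(v))) = O(|E|)$, using that $S^{(i-1)}$ is contained in the connected component of $s$, so the $|S^{(i-1)}|$ term is absorbed. Summing over the $R = O(\log(1/\xi)/\alpha)$ iterations yields the claimed $O(|E|\log(1/\xi)/\alpha)$ bound. No step is really an obstacle; the only slightly delicate piece is the clean matrix reformulation of the push step together with the identity $\p(r)W = \p(rW)$ that drives the inductive verification of the invariant.
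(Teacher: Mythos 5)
Your proof is correct and follows essentially the same route as the paper's: you establish the invariant $p^{(i)} = \p(s) - \p(r^{(i)}) = \p(s - r^{(i)})$ by induction using the recursion $\p(r) = \alpha r + (1-\alpha)\p(r)W$ and the identity $\p(r)W = \p(rW)$, then bound $\|r^{(R)}\|_\infty \leq \|r^{(R)}\|_1 = (1-\alpha)^R \leq \xi$. The only cosmetic difference is that the paper first proves a general push-flow lemma for an arbitrary pushed amount $x$ and then specializes to $x = r^{(i-1)}$, whereas you inline the special case directly; the underlying algebra and the running-time bound are the same.
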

The proof of Lemma~\ref{lem:correctness_of_standard_ppr} follows the analysis idea of~\cite{andersen2007using}. 
For completeness, we put the proof in \refappendix{sec:detailed_proof_of_correctness_of_standard_ppr}.
Next, we prove the sensitivity of Algorithm~\ref{alg:standard_push_flow} when every node has degree at least $D$.
\begin{theorem}[Sensitivity of \textsc{PushFlow}]\label{thm:sensitivity_standard_pushflow}
Consider two graphs $G=(V,E),G'=(V,E')$ where $G'\in\Gamma(G)$.
In addition, both $G$ and $G'$ have a minimum degree at least $D$.
Let $p,p'$ be the output of \textsc{PushFlow}$(G,s,\alpha,\xi)$ and \textsc{PushFlow}$(G',s,\alpha,\xi)$ respectively.
Then if $G'\in\Gamma_s(G)$, $\|p-p'\|_1\leq \frac{2\cdot (1-\alpha)}{\alpha\cdot D^2}$. 
Otherwise, $\|p-p'\|_1\leq \frac{2\cdot (1-\alpha)}{\alpha\cdot D}$.
\end{theorem}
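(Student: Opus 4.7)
The plan is to track how the $\ell_1$ divergence between the residual vectors $r^{(i)}$ and $r'^{(i)}$ produced on $G$ and $G'$ grows with $i$, and then convert this into an $\ell_1$ bound on the $p$-vectors via the identity $p^{(R)} = \alpha\sum_{i=0}^{R-1} r^{(i)}$ (which holds because $r^{(i-1)}$ is supported on $S^{(i-1)}$ and the loop only adds $\alpha r^{(i-1)}_v$ to $p^{(i)}_v$ for $v\in S^{(i-1)}$). First, I would rewrite the push step as $r^{(i)} = (1-\alpha)W_G\, r^{(i-1)}$, where $W_G = \tfrac{1}{2}(I + A_G\Lambda_G^{-1})$ is column-stochastic, so $\|W_G z\|_1 \le \|z\|_1$ for every $z$. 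Subtracting the analogous recursion for $G'$ and applying the triangle inequality gives
\[
\|r^{(i)}-r'^{(i)}\|_1 \;\le\; (1-\alpha)\,\|r^{(i-1)}-r'^{(i-1)}\|_1 \;+\; (1-\alpha)\,\|(W_G-W_{G'})\,r'^{(i-1)}\|_1 .
\]
Let $\{x,y\}$ be the endpoints of the edge that differs between $G$ and $G'$; then $W_G-W_{G'}$ is zero outside columns $x$ and $y$. A direct calculation (write out the lazy $\tfrac{1}{2}$ on the diagonal, the $\tfrac{1}{2d_G(x)}$ on each $G$-neighbor of $x$, and subtract the corresponding $G'$ entries, with $d_G(x)$ and $d_{G'}(x)$ differing by $1$) shows that column $x$ of $W_G-W_{G'}$ has $\ell_1$ norm exactly $1/\max\{d_G(x),d_{G'}(x)\} \le 1/D$, and likewise for column $y$. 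Hence $\|(W_G-W_{G'})\,r'^{(i-1)}\|_1 \le (r'^{(i-1)}_x + r'^{(i-1)}_y)/D$.

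Second, I would bound $r'^{(i-1)}_x + r'^{(i-1)}_y$ differently in the two cases. Column-stochasticity together with nonnegativity gives $\|r'^{(i)}\|_1 = (1-\alpha)^i$, so in the non-joint case (where the differing edge may be incident to $s$) the trivial bound $r'^{(i-1)}_x + r'^{(i-1)}_y \le \|r'^{(i-1)}\|_1 = (1-\alpha)^{i-1}$ already suffices. In the joint case we have $x,y\ne s$, hence $r'^{(0)}_x = r'^{(0)}_y = 0$; expanding the push recursion for $r'^{(i)}_x$ and using $d(u)\ge D$ on each inflow term yields $r'^{(i)}_x \le \tfrac{1-\alpha}{2}\,r'^{(i-1)}_x + \tfrac{(1-\alpha)^i}{2D}$, which by induction gives $r'^{(i)}_x \le (1-\alpha)^i/D$. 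The extra factor $1/D$ relative to the trivial bound is the crux of the improvement from $O(1/D)$ sensitivity to $O(1/D^2)$.

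Finally, I would solve the resulting recurrence. With $\|r^{(0)}-r'^{(0)}\|_1=0$ and either $\|r^{(i)}-r'^{(i)}\|_1 \le (1-\alpha)\|r^{(i-1)}-r'^{(i-1)}\|_1 + (1-\alpha)^i/D$ (non-joint) or with the bound $2(1-\alpha)^i/D^2$ (joint), a one-step induction gives $\|r^{(i)}-r'^{(i)}\|_1 \le c\cdot i\,(1-\alpha)^i$, with $c=1/D$ or $c=2/D^2$ respectively. Summing and using $\sum_{i\ge 0} i\,(1-\alpha)^i = (1-\alpha)/\alpha^2$, together with $\|p^{(R)}-p'^{(R)}\|_1 \le \alpha\sum_{i=0}^{R-1}\|r^{(i)}-r'^{(i)}\|_1$, yields $(1-\alpha)/(\alpha D)$ in the non-joint case and $2(1-\alpha)/(\alpha D^2)$ in the joint case, which imply the two inequalities claimed by the theorem. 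The main obstacle is the joint-case bound on $r'^{(i-1)}_x$: without the $x\ne s$ mass-accumulation argument (which critically uses the min-degree hypothesis on every intermediate hop), one only obtains the weaker $O(1/D)$ sensitivity, and the separation between the joint and non-joint bounds would disappear.
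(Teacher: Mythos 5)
Your proof is correct and is essentially the same argument as the paper's, just more cleanly packaged: the column-$\ell_1$ bound on $W_G-W_{G'}$ (at most $1/D$ in columns $x,y$, zero elsewhere) and the identity $p^{(R)}=\alpha\sum_{i<R}r^{(i)}$ replace the paper's element-wise telescoping of the potential $\|p^{(i)}-p'^{(i)}\|_1+\|r^{(i)}-r'^{(i)}\|_1$, but the two routes are algebraically equivalent and hinge on the same two key facts: the per-iteration perturbation is controlled by $(r'^{(i-1)}_x+r'^{(i-1)}_y)/D$, and $r'^{(i)}_x,r'^{(i)}_y\le(1-\alpha)^i$ in general while $r'^{(i)}_x,r'^{(i)}_y\le(1-\alpha)^i/D$ when $x,y\ne s$ (the min-degree hypothesis entering exactly there). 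Your observation that the trivial bound on the sum via $\|r'^{(i-1)}\|_1=(1-\alpha)^{i-1}$ saves a factor of $2$ in the non-joint case is a nice minor sharpening, though unnecessary to reach the stated bound.
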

\begin{proof}
Without loss of generality, suppose $G'$ has one more edge $(x,y)$ than $G$, i.e., $E'=E\cup \{(x,y)\}$.
Let $p^{(i)},r^{(i)}$ be the same as described in Algorithm~\ref{alg:standard_push_flow} when running \textsc{PushFlow}$(G,s,\alpha,\xi)$.
Similarly, let $p'^{(i)},r'^{(i)}$ be the vectors $p^{(i)},r^{(i)}$ described in Algorithm~\ref{alg:standard_push_flow} when running \textsc{PushFlow}$(G',s,\alpha,\xi)$.
Thus, our goal is to bound $\|p^{(R)}-p'^{(R)}\|_1$.
It suffices to bound $\|p^{(R)}-p'^{(R)}\|_1+\|r^{(R)}-r'^{(R)}\|_1$.
Let $d(v)$ denote the degree of $v$ in $G$ and let $d'(v)$ denote the degree of $v$ in $G'$.
Consider $i\in[R]$.
We have:
{
\small
\begin{align}
&\|p^{(i)}-p'^{(i)}\|_1+\|r^{(i)}-r'^{(i)}\|_1\notag\\
\leq&  \sum_{v \in V} \left|(p_v^{(i-1)}+\alpha\cdot r_v^{(i-1)})-(p_v'^{(i-1)}+\alpha\cdot r_v'^{(i-1)})\right|\notag\\
&+\sum_{v\in V}\frac{1-\alpha}{2}\cdot \left|r_v^{(i-1)}- r_v'^{(i-1)}\right|\notag+\sum_{v\in V}\frac{1-\alpha}{2}\cdot\left|\sum_{u:(u,v)\in E}  \frac{r_u^{(i-1)}}{d(u)}- \sum_{u':(u',v)\in E'}\frac{r_{u'}'^{(i-1)}}{d'(u')}\right|\notag\\
\leq &\|p^{(i-1)}-p'^{(i-1)}\|_1+\frac{1+\alpha}{2}\cdot \|r^{(i-1)}-r'^{(i-1)}\|_1\label{eq:difference_introduced_by_last_round}\\
&+\frac{1-\alpha}{2}\cdot\left(\sum_{v\in V}\sum_{u:(u,v)\in E}\left|\frac{r_u^{(i-1)}}{d(u)}-\frac{r_u'^{(i-1)}}{d'(u)}\right|+ \left(\frac{r_x'^{(i-1)}}{d'(x)}+\frac{r_y'^{(i-1)}}{d'(y)}\right)\right)\label{eq:difference_of_additional_edge}.
\end{align}
}
By reordering the terms, part~\eqref{eq:difference_of_additional_edge} is equal to:
{
\small
\begin{align*}
&\frac{1-\alpha}{2}\cdot \sum_{u\in V\setminus\{x,y\}}d(u)\cdot \left|\frac{r_u^{(i-1)}}{d(u)}-\frac{r_u'^{(i-1)}}{d(u)}\right|\\
+&\frac{1-\alpha}{2}\cdot\left(\left(d(x)\cdot \left|\frac{r_x^{(i-1)}}{d(x)}-\frac{r_x'^{(i-1)}}{d(x)+1}\right|+\frac{r_x'^{(i-1)}}{d(x)+1}\right)+\left(d(y)\cdot \left|\frac{r_y^{(i-1)}}{d(y)}-\frac{r_y'^{(i-1)}}{d(y)+1}\right|+\frac{r_y'^{(i-1)}}{d(y)+1}\right)\right).
\end{align*}
}
Notice that 
$
     d(x)\cdot \left|\frac{r_x^{(i-1)}}{d(x)}-\frac{r_x'^{(i-1)}}{d(x)+1}\right|+\frac{r_x'^{(i-1)}}{d(x)+1}\leq |r_x^{(i-1)}-r_x'^{(i-1)}|+\frac{2\cdot r_x'^{(i-1)}}{d(x)+1}.
$
Similar arguments hold for $y$. 
Thus, part \eqref{eq:difference_of_additional_edge} is at most $(1-\alpha)/2\cdot (\|r^{(i-1)}-r'^{(i-1)}\|_1+2\cdot(r_x'^{(i-1)}/d'(x)+r_y'^{(i-1)}/d'(y)))$.
Therefore \eqref{eq:difference_introduced_by_last_round}$+$\eqref{eq:difference_of_additional_edge}$\leq \|p^{(i-1)}-p'^{(i-1)}\|_1+\|r^{(i-1)}-r'^{(i-1)}\|_1 + (r_x'^{(i-1)}/d'(x)+r_y'^{(i-1)}/d'(y))$.
Since $d'(x),d'(y)\geq D$, we have:
$
\|p^{(i)}-p'^{(i)}\|_1+\|r^{(i)}-r'^{(i)}\|_1
\leq\|p^{(i-1)}-p'^{(i-1)}\|_1+\|r^{(i-1)}-r'^{(i-1)}\|_1
+(1-\alpha)\cdot \left(r_x'^{(i-1)}+r_y'^{(i-1)}\right)/D.
$

In~\refappendix{sec:proof_of_claims}, we show that ${r'}_x^{(i-1)},{r'}_y^{(i-1)}\leq (1-\alpha)^{i-1}$ and if in addition $s\not=x,y$, ${r'}_x^{(i-1)},{r'}_y^{(i-1)}\leq (1-\alpha)^{i-1}/D$.
Thus,
if $s\not=x,y$, we have:
$\|p^{(R)}-p'^{(R)}\|_1\leq 2\cdot (1-\alpha)/D\cdot \sum_{i=1}^R (1-\alpha)^{i-1}/D\leq \frac{2\cdot (1-\alpha)}{\alpha\cdot D^2}$.
Otherwise, 
we have 
$\|p^{(R)}-p'^{(R)}\|_1\leq 2\cdot (1-\alpha)/D\cdot \sum_{i=1}^R (1-\alpha)^{i-1}\leq \frac{2\cdot (1-\alpha)}{\alpha\cdot D}$
\end{proof}
In~\refappendix{sec:tight_bound_of_sensitivity_of_ppr}, we show that the analysis in Theorem~\ref{thm:sensitivity_standard_pushflow} is tight as the sensitivity of the ground truth PPR in graphs with minimum degree $D$ can be $\Omega(1/D)$ (or $\Omega(1/D^2)$ for joint sensitivity). 

If input graphs were always guaranteed to have minimum degree at least $D$, we could obtain a DP or a joint DP PPR algorithm by applying the Laplace mechanism on the output of Algorithm~\ref{alg:standard_push_flow} (see~\refappendix{sec:DPPushFlow}).
However, in practice the input graphs can have low degree nodes.
In this case, the sensitivity of the vanilla push-flow algorithm (Algorithm~\ref{alg:standard_push_flow}) can be very high.
In the next section, we address the question of how to modify the algorithm to ensure low sensitivity for {\it any} input graph.

\section{Push-Flow with Bounded Sensitivity in General Graphs}
In this section, we propose a variant of the push-flow algorithm of which sensitivity (resp. joint sensitivity) is always bounded by an input parameter $\sigma$.
As a result, we can apply Laplace mechanism (Theorem~\ref{thm:laplace_mechanism}) to obtain a DP PPR (resp. a joint DP PPR) algorithm for all possible general input graphs and thus the added noise is controlled by $\sigma$.
In addition, our new algorithm achieves the same approximation  as Algorithm~\ref{alg:standard_push_flow} when every node in the input graph has a relatively high degree.

We explain the intuition of our algorithm as the following.
Recall the analysis of the sensitivity of Algorithm~\ref{alg:standard_push_flow} (the proof of Theorem~\ref{thm:sensitivity_standard_pushflow}).
The reason that it may introduce a large sensitivity is because every node $x$ with residual ${r'_x}^{(i-1)}$ may push $\sim {r'_x}^{(i-1)}/d'(x)$ amount of flow along each of its incident edges.
If $d'(x)$ is small, the sensitivity introduced by an edge incident to $x$ can be very large. 
Therefore, to control the sensitivity, a natural idea is to set a threshold for each edge such that the total pushed flow along each edge can never be above the threshold.
We present our sensitivity-bounded push-flow algorithm in Algorithm~\ref{alg:capped_push_flow}.
For sake of presentation, this algorithm and the following ones have a parameter $\tp \in \{\joint, \njoint\}$ indicating whether we are working in the \joint DP case or in the \njoint DP case. 

\begin{algorithm}[h]
	\footnotesize
	\begin{algorithmic}[1]\caption{\textsc{PushFlowCap}$(G,s,\alpha,\xi,\sigma,\tp)$}\label{alg:capped_push_flow}
	\STATE \textbf{Input:} Graph $G=(V,E)$, source node $s\in V$, teleport probability $\alpha$, precision $\xi$, sensitivity parameter $\sigma$, and $\tp\in \{\joint,\njoint\}$ indicating whether joint DP sensitivity or (vanilla) DP sensitivity is required.
	\STATE \textbf{Output:} Approximate PPR vector for $\p(s)$.
	\STATE Initialize the set of nodes with positive residual $S^{(0)}\gets \{s\},$ PPR $p^{(0)}\gets \mathbf{0}^n$, residual $r^{(0)}\gets e_s,$ total pushed flow $h^{(0)}\gets\mathbf{0}^n$, number of rounds $R\gets \lceil \ln(1/\xi)/\alpha \rceil$, and thresholds $T\in\mathbb{R}^n$ such that 
	\begin{enumerate}
	    \item If $\tp=\joint$, $T_s\gets\infty$ and 
	    $\forall u\not= s,$  $T_u \gets \sigma / ((3-\alpha)\cdot (1-(1-\alpha)^R))$
	    \item Otherwise, 
	    $\forall u\in V,$  $T_u\gets \sigma / ((3-\alpha)\cdot (1-(1-\alpha)^R))$.
	\end{enumerate}
	\FOR{$i:=1\rightarrow R$}
	    \FOR{Each node $v\in S^{(i-1)}$}
	        \STATE $f_v^{(i)}\gets \min(r_v^{(i-1)},d(v)\cdot T_v-h_v^{(i-1)})$. \hfill{//Compute the flow to push for node $v$.}
	        \STATE $h_v^{(i)}\gets h_v^{(i-1)}+f_v^{(i)}$. \hfill{//Update the total pushed flow of node $v$.}
	        \STATE $p_v^{(i)}\gets p_v^{(i-1)},r_v^{(i)}\gets r_v^{(i-1)}-f_v^{(i)}$.
	    \ENDFOR
	    \STATE $S^{(i)}\gets S^{(i-1)}$.
	    \FOR{Each node $v\in S^{(i-1)}$ with $f_v^{(i)}>0$}
	        \STATE $p_v^{(i)}\gets p_v^{(i)}+\alpha\cdot f_v^{(i)},r_v^{(i)}\gets r_v^{(i)}+(1-\alpha)/2\cdot f_v^{(i)}$. \hfill{// Do actual flow push}
	        \STATE For each neighbor $u$, i.e., $(v,u)\in E:r_u^{(i)}\gets r_u^{(i)}+(1-\alpha)/2\cdot f_v^{(i)}/d(v)
	        ,S^{(i)}\gets S^{(i)}\cup \{u\}$.
	    \ENDFOR
	\ENDFOR
	\STATE Output $p^{(R)}$.
	\end{algorithmic}
\end{algorithm}

Firstly, we show that the joint/non-joint sensitivity of \textsc{PushFlowCap}$(G,s,\alpha,\xi,\sigma,\joint/\njoint)$ is indeed bounded by $\sigma$.
We will use the following observation.
\begin{observation}\label{obs:properies_capped_pushflow}
Consider $p^{(i)},h^{(i)},f^{(i)}$ in \textsc{PushFlowCap}$(G,s,\alpha,\xi,\sigma,\joint/\njoint)$. 
Then $\forall v\in V,i\in[R]$:
    (1) $h_v^{(i)}=\sum_{j=1}^i f_v^{(j)}$.
    (2) $h_v^{(i)}=\min(\sum_{j=0}^{i-1} r_v^{(j)}, d(v)\cdot T_v)$.
    (3) $p_v^{(i)}=\alpha\cdot h_v^{(i)}$.
\end{observation}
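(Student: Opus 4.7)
My plan is to prove all three identities by a simultaneous induction on $i\in\{0,1,\ldots,R\}$. The base case $i=0$ is immediate from the initialization $h^{(0)}=p^{(0)}=\mathbf{0}^n$ together with the empty-sum conventions, so (1) and (3) read $0=0$ and (2) reads $0=\min(0,d(v)T_v)$. For any node $v\notin S^{(i-1)}$, I would first observe the invariant $r_v^{(i-1)}=h_v^{(i-1)}=p_v^{(i-1)}=0$ (such $v$ has never received an inflow), so extending the definition by $f_v^{(i)}:=0$ is consistent with the algorithm's bookkeeping and makes all three claims propagate trivially. The remainder of the inductive step therefore concerns $v\in S^{(i-1)}$.

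I would dispatch parts (1) and (3) first, since they are essentially mechanical. Part (1) follows directly from the update rule $h_v^{(i)}\gets h_v^{(i-1)}+f_v^{(i)}$ combined with the inductive hypothesis $h_v^{(i-1)}=\sum_{j=1}^{i-1}f_v^{(j)}$. For part (3), I would note that across the two inner sub-loops the entry $p_v^{(i)}$ ends up equal to $p_v^{(i-1)}+\alpha\cdot f_v^{(i)}$ (the increment in the second sub-loop happens exactly when $f_v^{(i)}>0$, and is $0$ otherwise, so the identity holds uniformly), then combine this with the inductive hypothesis $p_v^{(i-1)}=\alpha\cdot h_v^{(i-1)}$ and part (1) to conclude $p_v^{(i)}=\alpha\cdot(h_v^{(i-1)}+f_v^{(i)})=\alpha\cdot h_v^{(i)}$.

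The main obstacle is part (2), which requires unwinding the effect of the cap. Here I would use the inductive hypothesis $h_v^{(i-1)}=\min\bigl(\sum_{j=0}^{i-2}r_v^{(j)},\,d(v)T_v\bigr)$ and the defining formula $f_v^{(i)}=\min\bigl(r_v^{(i-1)},\,d(v)T_v-h_v^{(i-1)}\bigr)$, and then split into two cases. If $h_v^{(i-1)}=d(v)T_v$ (the cap has already been saturated), then $d(v)T_v-h_v^{(i-1)}=0$ forces $f_v^{(i)}=0$ (using that residuals stay nonnegative, which itself follows by a one-line side induction from $f_v^{(j)}\leq r_v^{(j-1)}$), so $h_v^{(i)}=d(v)T_v$; and $\sum_{j=0}^{i-2}r_v^{(j)}\geq d(v)T_v$ together with $r_v^{(i-1)}\geq 0$ yields the same value on the right-hand side. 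Otherwise $h_v^{(i-1)}=\sum_{j=0}^{i-2}r_v^{(j)}\leq d(v)T_v$, and a short sub-case analysis on which argument attains the $\min$ in $f_v^{(i)}$ gives either $h_v^{(i)}=\sum_{j=0}^{i-1}r_v^{(j)}\leq d(v)T_v$ or $h_v^{(i)}=d(v)T_v\leq\sum_{j=0}^{i-1}r_v^{(j)}$, both of which match the right-hand side of (2). The only auxiliary fact needed beyond the inductive hypothesis is nonnegativity of residuals, which is immediate from the algorithm's updates.
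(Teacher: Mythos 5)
The paper states this observation without proof, treating it as self-evident from the algorithm's bookkeeping, so there is no "paper's argument" to compare against in detail. Your proof by simultaneous induction on $i$ is the natural one and it is correct: part (1) is immediate from $h_v^{(i)} = h_v^{(i-1)} + f_v^{(i)}$, part (3) is immediate once you note that the two inner sub-loops together effect $p_v^{(i)} = p_v^{(i-1)} + \alpha f_v^{(i)}$, and part (2) follows from a case split on whether the cap $d(v)T_v$ was already saturated in round $i-1$. You correctly identify the two auxiliary facts the paper leaves implicit, namely (a) that for $v \notin S^{(i-1)}$ one has $r_v^{(j)} = h_v^{(j)} = p_v^{(j)} = 0$ for all $j \leq i-1$ (since such a node has never received inflow and hence was never touched), so the convention $f_v^{(i)} := 0$ is consistent; and (b) that residuals stay nonnegative and the cap $h_v^{(j)} \leq d(v)T_v$ is never exceeded, which together guarantee $0 \leq f_v^{(i)} \leq r_v^{(i-1)}$ and make the case analysis in (2) go through. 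The one place I would tighten your writeup is the phrase "a one-line side induction from $f_v^{(j)} \leq r_v^{(j-1)}$": to conclude $f_v^{(j)} \leq r_v^{(j-1)}$ and $f_v^{(j)} \geq 0$ you in fact need the joint inductive invariant $r_v^{(j-1)} \geq 0$ \emph{and} $h_v^{(j-1)} \leq d(v)T_v$ (otherwise the second argument of the $\min$ could be negative), so the side induction maintains two facts, not one; once phrased that way it does close in a couple of lines.
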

Using this observation we show the following lemma on $h^{(i)}_v$.
\begin{lemma}\label{lem:property_of_h}
Consider $h^{(i)}$ and $r^{(i)}$ in 
\textsc{PushFlowCap}$(G,s,\alpha,\xi,\sigma,\joint/\njoint)$.
$\forall v\in V,i\in[R]$, $h_v^{(i)}=\min\Big (r_v^{(0)}+\frac{1-\alpha}{2}\cdot \big (h_v^{(i-1)}+ \underset{u:(u,v)\in E}{\sum} \frac{h_u^{(i-1)}}{d(u)}\big),$ $d(v)\cdot T_v \Big)$.
\end{lemma}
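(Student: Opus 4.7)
The plan is to derive the identity by combining the update rules of Algorithm~\ref{alg:capped_push_flow} into a clean per-round recursion for the residual and then telescoping, before plugging into the cap in the definition of $f_v^{(i)}$.

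First, I would establish that for every node $v\in V$ and every round $j\in[R]$,
\begin{align*}
r_v^{(j)} \;=\; r_v^{(j-1)} \;-\; \frac{1+\alpha}{2}\cdot f_v^{(j)} \;+\; \frac{1-\alpha}{2}\cdot \sum_{u:(u,v)\in E} \frac{f_u^{(j)}}{d(u)},
\end{align*}
adopting the convention $f_u^{(j)}:=0$ whenever $u\notin S^{(j-1)}$. This is read directly off the two inner loops: the first loop subtracts $f_v^{(j)}$ from $r_v$, while the second loop adds the self-keep term $\frac{1-\alpha}{2} f_v^{(j)}$ (when $f_v^{(j)}>0$) and a contribution $\frac{1-\alpha}{2} f_u^{(j)}/d(u)$ from every neighbor $u$ that pushes into $v$.

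Second, I would telescope this identity from $j=1$ to $j=i-1$ and invoke Observation~\ref{obs:properies_capped_pushflow}(1), which states $h_w^{(k)}=\sum_{j=1}^{k} f_w^{(j)}$, to obtain
\begin{align*}
r_v^{(i-1)} \;=\; r_v^{(0)} \;-\; \frac{1+\alpha}{2}\cdot h_v^{(i-1)} \;+\; \frac{1-\alpha}{2}\cdot \sum_{u:(u,v)\in E}\frac{h_u^{(i-1)}}{d(u)}.
\end{align*}
Third, the definition $f_v^{(i)}=\min(r_v^{(i-1)},\, d(v)\cdot T_v - h_v^{(i-1)})$ yields $h_v^{(i)} = h_v^{(i-1)} + f_v^{(i)} = \min\bigl(h_v^{(i-1)}+r_v^{(i-1)},\, d(v)\cdot T_v\bigr)$. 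Substituting the telescoped expression for $r_v^{(i-1)}$ and using $1-\frac{1+\alpha}{2}=\frac{1-\alpha}{2}$ collapses the $h_v^{(i-1)}$ terms and factors the residual and neighbor sums into exactly the form appearing in the lemma.

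The main delicate point is justifying the convention $f_u^{(j)}=0$ for $u\notin S^{(j-1)}$ (since the algorithm's first inner loop only defines $f_u^{(j)}$ for $u\in S^{(j-1)}$). For this one observes that if $u\notin S^{(j-1)}$ then no neighbor of $u$ has ever pushed positive flow in an earlier round (otherwise $u$ would have been added to $S$ at that round), so $r_u^{(j-1)}=h_u^{(j-1)}=0$ and both the per-round recursion and the telescoped identity hold trivially for such $u$. Once this bookkeeping is in place, the rest is a routine algebraic substitution.
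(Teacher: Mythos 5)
Your proof is correct, and it actually takes a cleaner route than the paper's. The paper's proof starts from the identity $r_v^{(i)}=\frac{1-\alpha}{2}\left(f_v^{(i)}+\sum_{u:(u,v)\in E}f_u^{(i)}/d(u)\right)$, sums it over rounds to get $\sum_{j=0}^{i-1}r_v^{(j)}$, and then applies Observation~\ref{obs:properies_capped_pushflow}(2), $h_v^{(i)}=\min\left(\sum_{j=0}^{i-1}r_v^{(j)},\,d(v)T_v\right)$. That starting identity silently drops the carry-over term $r_v^{(i-1)}-f_v^{(i)}$, so it is valid only when the threshold is inactive, i.e., $f_v^{(i)}=r_v^{(i-1)}$; once a node saturates its cap, residual accumulates and the identity no longer holds as stated. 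Your argument instead retains the exact per-round recursion $r_v^{(j)}=r_v^{(j-1)}-\frac{1+\alpha}{2}f_v^{(j)}+\frac{1-\alpha}{2}\sum_{u:(u,v)\in E}f_u^{(j)}/d(u)$, which is unconditionally true, telescopes it to obtain $r_v^{(i-1)}$ in terms of $h^{(i-1)}$, and then substitutes into $h_v^{(i)}=\min\left(h_v^{(i-1)}+r_v^{(i-1)},\,d(v)T_v\right)$ using $1-\frac{1+\alpha}{2}=\frac{1-\alpha}{2}$. This reaches the same formula without invoking Observation~\ref{obs:properies_capped_pushflow}(2) and without the unstated no-saturation assumption, so your route is both a little shorter and more robustly valid. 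The bookkeeping remark about setting $f_u^{(j)}=0$ for $u\notin S^{(j-1)}$ is exactly the right thing to say to make the sums well-defined.
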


\begin{theorem}[Sensitivity and joint sensitivity of \textsc{PushFlowCap}]\label{thm:sensitivity_capped_pushflow}
Consider two graphs $G=(V,E),G'=(V,E')$.
Let $p,p'$ be the output of \textsc{PushFlowCap}$(G,s,\alpha,\xi,\sigma,\tp)$ and \textsc{PushFlowCap}$(G',s,\alpha,\xi,\sigma,\tp)$ respectively.
For $\tp=\joint$ and $G'\in\Gamma_s(G)$, or $\tp=\njoint$ and $G'\in\Gamma(G)$, then $\|p-p'\|_1\leq \sigma$.
\end{theorem}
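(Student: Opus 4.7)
The plan is to reduce the sensitivity bound on $p - p'$ to a bound on $h^{(R)} - h'^{(R)}$, then derive a one-step contraction-plus-extra-term recurrence on $\|h^{(i)} - h'^{(i)}\|_1$ by exploiting Lemma~\ref{lem:property_of_h} together with the cap bound $h_u^{(i)}/d(u) \le T_u$ implied by Observation~\ref{obs:properies_capped_pushflow}(2). Concretely, Observation~\ref{obs:properies_capped_pushflow}(3) gives $\|p - p'\|_1 = \alpha \cdot \|h^{(R)} - h'^{(R)}\|_1$, so it suffices to show $\|h^{(R)} - h'^{(R)}\|_1 \le \sigma/\alpha$. WLOG write $E' = E \cup \{(x,y)\}$; in the \joint case the hypothesis $G' \in \Gamma_s(G)$ forces $x,y \ne s$, so $T_x = T_y = \sigma/(2(2-\alpha))$ in both cases we need to handle.

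The heart of the argument is the following per-round estimate. Applying Lemma~\ref{lem:property_of_h} and the fact that $u \mapsto \min(u, c)$ is $1$-Lipschitz, one obtains, for every $v$,
\begin{equation*}
|h_v^{(i)} - h_v'^{(i)}| \le \tfrac{1-\alpha}{2}|h_v^{(i-1)} - h_v'^{(i-1)}| + \tfrac{1-\alpha}{2}\Bigl|\sum_{u:(u,v)\in E} \tfrac{h_u^{(i-1)}}{d(u)} - \sum_{u:(u,v)\in E'} \tfrac{h_u'^{(i-1)}}{d'(u)}\Bigr|.
\end{equation*}
For $v \notin \{x,y\}$ the two edge sets coincide; for $v=x$ the $G'$ sum has the extra term $h_y'^{(i-1)}/d'(y) \le T_y$, and symmetrically for $v=y$. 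This gives an extra additive $T_x + T_y$ coming from the new edge.

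Summing over $v$ and swapping the order of summation rearranges the remaining double sum into $\sum_u d(u)\cdot |h_u^{(i-1)}/d(u) - h_u'^{(i-1)}/d'(u)|$. For $u \notin \{x,y\}$, $d(u) = d'(u)$ and this term equals $|h_u^{(i-1)} - h_u'^{(i-1)}|$. For $u \in \{x,y\}$ one uses the triangle inequality
\begin{equation*}
d(x) \cdot \bigl|\tfrac{h_x^{(i-1)}}{d(x)} - \tfrac{h_x'^{(i-1)}}{d(x)+1}\bigr| \le |h_x^{(i-1)} - h_x'^{(i-1)}| + \tfrac{h_x'^{(i-1)}}{d'(x)} \le |h_x^{(i-1)} - h_x'^{(i-1)}| + T_x,
\end{equation*}
and similarly for $y$, invoking the cap from Observation~\ref{obs:properies_capped_pushflow}(2) for the last inequality. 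Collecting everything, the per-round recurrence becomes
\begin{equation*}
\|h^{(i)} - h'^{(i)}\|_1 \le (1-\alpha)\|h^{(i-1)} - h'^{(i-1)}\|_1 + (1-\alpha)\,(T_x + T_y).
\end{equation*}

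With $h^{(0)} = h'^{(0)} = \mathbf{0}$, unrolling gives $\|h^{(R)} - h'^{(R)}\|_1 \le (T_x+T_y)\sum_{i=1}^R (1-\alpha)^i \le (T_x+T_y)(1-\alpha)/\alpha$, hence $\|p - p'\|_1 \le (1-\alpha)(T_x + T_y) = (1-\alpha)\,\sigma/(2-\alpha) \le \sigma$, as required. The main technical obstacle I anticipate is the bookkeeping for $u \in \{x,y\}$: extracting a clean $|h_u - h'_u| + T_u$ bound from the mismatched denominators $d(u)$ vs.\ $d(u)+1$, and ensuring the two sources of extra error (the missing neighbor at $v \in \{x,y\}$ and the shifted degree at $u \in \{x,y\}$) combine into a telescoping geometric sum rather than accumulating across rounds.
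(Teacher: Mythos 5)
Your overall plan matches the paper's proof: reduce via Observation~\ref{obs:properies_capped_pushflow}(3) to $\|h^{(R)}-h'^{(R)}\|_1$, derive a contraction-plus-constant recurrence using Lemma~\ref{lem:property_of_h}, and unroll. However there is a concrete gap in the per-round estimate for $v\in\{x,y\}$. You invoke that $u\mapsto\min(u,c)$ is $1$-Lipschitz, but the threshold in Lemma~\ref{lem:property_of_h} is $d(v)\cdot T_v$, and for $v\in\{x,y\}$ the degree changes between $G$ and $G'$: you are actually comparing $\min(\cdot,d(v)T_v)$ with $\min(\cdot,(d(v)+1)T_v)$, which are different functions. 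The correct inequality is $|\min(a,c_1)-\min(b,c_2)|\le |a-b|+|c_1-c_2|$, and here $|c_1-c_2| = |d(v)-d'(v)|\,T_v = T_v$. This is a \emph{third} source of extra error beyond the two you flag (the missing neighbor in the $G'$ sum at $v\in\{x,y\}$ and the mismatched denominators at $u\in\{x,y\}$), and it contributes an additive $T_x+T_y$ per round \emph{not} multiplied by $\tfrac{1-\alpha}{2}$.

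As a result your per-round recurrence $\|h^{(i)}-h'^{(i)}\|_1 \le (1-\alpha)\|h^{(i-1)}-h'^{(i-1)}\|_1 + (1-\alpha)(T_x+T_y)$ is too optimistic; the correct one (as in the paper) is
\begin{equation*}
\|h^{(i)}-h'^{(i)}\|_1 \le (1-\alpha)\|h^{(i-1)}-h'^{(i-1)}\|_1 + (2-\alpha)(T_x+T_y),
\end{equation*}
whose unrolling gives $\|h^{(R)}-h'^{(R)}\|_1 \le (2-\alpha)(T_x+T_y)/\alpha$ and thus $\|p-p'\|_1 \le (2-\alpha)(T_x+T_y) = \sigma$ exactly. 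The theorem's conclusion survives, and the threshold constant $\sigma/(2(2-\alpha))$ in Algorithm~\ref{alg:capped_push_flow} is calibrated precisely to make this bound tight, so your claimed strict improvement $(1-\alpha)\sigma/(2-\alpha)$ is a signal that a term was dropped. Everything else in your argument (the $\le T_u$ caps from Observation~\ref{obs:properies_capped_pushflow}(2), the reordering of the double sum, the triangle-inequality handling of $d(u)$ vs.\ $d(u)+1$, and the observation that $T_x,T_y$ are finite in both the \joint\ and \njoint\ cases under the theorem's hypotheses) is correct and matches the paper.
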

\vspace{-0.18in}
\begin{proof}
Without loss of generality, let $G'\in\Gamma(G)$ has exactly one more edge $(x,y)$ than $G$, i.e., $E'=E\cup \{(x,y)\}$.
Let $p^{(i)},h^{(i)}$ be the same as described in Algorithm~\ref{alg:capped_push_flow} when running \textsc{PushFlowCap}$(G,s,\alpha,\xi,\sigma,\tp)$.
Similarly, let $p'^{(i)},h'^{(i)}$ be the vectors $p^{(i)},h^{(i)}$ described in Algorithm~\ref{alg:capped_push_flow} when running \textsc{PushFlowCap}$(G',s,\alpha,\xi,\sigma,\tp)$.
Let $d(v)$ denote the degree of $v$ in $G$ and let $d'(v)$ denote the degree of $v$ in $G'$.
To prove the lemma, our goal is to bound $\|p^{(R)}-p'^{(R)}\|_1$.
According to Observation~\ref{obs:properies_capped_pushflow}, we have $\|p^{(R)}-p'^{(R)}\|_1=\alpha\cdot \|h^{(R)}-h'^{(R)}\|_1$.
It suffices to bound $\|h^{(R)}-h'^{(R)}\|_1$.
Notice that $\forall a_1,a_2,a_3,a_4\in\mathbb{R}$, it is easy to verify that
$|\min(a_1,a_2)-\min(a_3,a_4)|\leq  \max(|a_1-a_3|,|a_2-a_4|)$.
Consider $i\in [R]$.
According to Lemma~\ref{lem:property_of_h}, for every $v\not=x,y$, we have:
{
\small
\begin{align*}
|h_v^{(i)}-h_v'^{(i)}|
\leq &\max \left( |r_v^{(0)}-r_v'^{(0)}|+\frac{1-\alpha}{2}\left(|h_v^{(i-1)}-h_v'^{(i-1)}|+\sum_{u:(u,v)\in E}\left| \frac{h_u^{(i-1)}}{d(u)}-\frac{h_u'^{(i-1)}}{d'(u)}\right|\right),\left|(d(v)-d'(v)) T_v\right|\right)\\
=&\frac{1-\alpha}{2}\cdot \left(\left|h_v^{(i-1)}-h_v'^{(i-1)}\right|+\sum_{u:(u,v)\in E}\left| \frac{h_u^{(i-1)}}{d(u)}-\frac{h_u'^{(i-1)}}{d'(u)}\right|\right),
\end{align*}
}
where the last step follows from $r_v^{(0)}=r_v'^{(0)}$ and $d(v)=d'(v)$.
Suppose $v\in \{x,y\}$.
Let $v'$ be the other node in $x,y$, i.e., $v'\in\{x,y\}\setminus\{v\}$.
We have:
{
\small
\begin{align*}
&|h_v^{(i)}-h_v'^{(i)}|\\
\leq & \max \left( |r_v^{(0)}-r_v'^{(0)}|+\frac{1-\alpha}{2}\cdot \left( |h_v^{(i-1)}-h_v'^{(i-1)}|+\frac{h_{v'}'^{(i-1)}}{d'(v')} +\sum_{u:(u,v)\in E}\left| \frac{h_u^{(i-1)}}{d(u)}-\frac{h_u'^{(i-1)}}{d'(u)}\right|\right),|(d(v)-d'(v)) T_v|\right)\\
\leq &   |r_v^{(0)}-r_v'^{(0)}|+\frac{1-\alpha}{2}\cdot \left( |h_v^{(i-1)}-h_v'^{(i-1)}|+\sum_{u:(u,v)\in E}\left| \frac{h_u^{(i-1)}}{d(u)}-\frac{h_u'^{(i-1)}}{d'(u)}\right|\right)+\max\left(\frac{1-\alpha}{2}\cdot\frac{h_{v'}'^{(i-1)}}{d'(v')},|(d(v)-d'(v)) T_v|\right) \\
=&\frac{1-\alpha}{2}\cdot  \left(\left|h_v^{(i-1)}-h_v'^{(i-1)}\right|+\sum_{u:(u,v)\in E}\left| \frac{h_u^{(i-1)}}{d(u)}-\frac{h_u'^{(i-1)}}{d'(u)}\right|\right)+ \max\left(\frac{1-\alpha}{2}\cdot\frac{h_{v'}'^{(i-1)}}{d'(v')},T_v\right) \\
 =  &  \frac{1-\alpha}{2}\cdot  \left(\left|h_v^{(i-1)}-h_v'^{(i-1)}\right|+\sum_{u:(u,v)\in E}\left| \frac{h_u^{(i-1)}}{d(u)}-\frac{h_u'^{(i-1)}}{d'(u)}\right|\right)+T_v,
\end{align*}
}
where the  third step follows from $|d(v)-d(v')|=1$ and $r_v^{(0)}=r_v'^{(0)}$ and the last step follows from $\frac{h_{v'}'^{(i-1)}}{d'(v')}\leq T_v$ and $(1-\alpha)/2<1$.
Therefore, 
{\small
\begin{align*}
\|h^{(i)}-h'^{(i)}\|_1
\leq & \left( \left\|h^{(i-1)}-h'^{(i-1)}\right\|_1+\sum_{v\in V}\sum_{u:(u,v)\in E}\left| \frac{h_u^{(i-1)}}{d(u)}-\frac{h_u'^{(i-1)}}{d'(u)}\right|
\right)\cdot \frac{1-\alpha}{2}+T_x+T_y\\
=&\left( \left\|h^{(i-1)}-h'^{(i-1)}\right\|_1
+ \sum_{u\in V\setminus\{x,y\}} \left|h_u^{(i-1)}-h_u'^{(i-1)}\right|\right.\\
&\left.+d(x)\cdot \left| \frac{h_x^{(i-1)}}{d(x)}-\frac{h_x'^{(i-1)}}{d(x)+1}\right|
+d(y)\cdot \left| \frac{h_y^{(i-1)}}{d(y)}-\frac{h_y'^{(i-1)}}{d(y)+1}\right|
\right)\cdot \frac{1-\alpha}{2}+T_x+T_y.
\end{align*}
}
Notice that $d(x)\cdot \left| \frac{h_x^{(i-1)}}{d(x)}-\frac{h_x'^{(i-1)}}{d(x)+1}\right|\leq |h_x^{(i-1)}-h_x'^{(i-1)}|+\frac{h_{x}'^{(i-1)}}{d(x)+1}$.
Similarly, $d(y)\cdot \left| \frac{h_y^{(i-1)}}{d(y)}-\frac{h_y'^{(i-1)}}{d(y)+1}\right|\leq |h_y^{(i-1)}-h_y'^{(i-1)}|+\frac{h_{y}'^{(i-1)}}{d(y)+1}$.
Therefore, we have:
{
\small
\begin{align*}
\|h^{(i)}-h'^{(i)}\|_1\leq& (1-\alpha)\cdot \left( \left\|h^{(i-1)}-h'^{(i-1)}\right\|_1+\frac12\cdot\frac{h_{x}'^{(i-1)}}{d'(x)}+\frac12\cdot\frac{h_{y}'^{(i-1)}}{d'(y)}\right)+T_x+T_y.
\end{align*}
}
According to Observation~\ref{obs:properies_capped_pushflow}, we have $\frac{h_{x}'^{(i-1)}}{d'(x)}\leq T_x$ and $\frac{h_{y}'^{(i-1)}}{d'(y)}\leq T_y$.
Therefore, $\|h^{(i)}-h'^{(i)}\|_1\leq (1-\alpha)\cdot\left\|h^{(i-1)}-h'^{(i-1)}\right\|_1+\frac{3-\alpha}2\cdot (T_x+T_y)$.
Since $\|h^{(0)}-h'^{(0)}\|_1=0$, we have $\|h^{(R)}-h'^{(R)}\|_1\leq \frac{3-\alpha}2\cdot (T_x+T_y)\cdot(1-(1-\alpha)^R)/\alpha$.
Hence, $\|p^{(R)}-p'^{(R)}\|_1\leq \frac{3-\alpha}2\cdot (T_x+T_y)\cdot(1-(1-\alpha)^R)$.
If $G'\in\Gamma_s(G)$, i.e., $s\not=x,y$, $T_x+T_y=2\cdot \frac{\sigma}{ (3-\alpha)\cdot (1-(1-\alpha)^R)}$.
If non-joint sensitivity is considered and $G'\not\in\Gamma_s(G)$, i.e., $s\in\{x,y\}$, $T_x+T_y\leq 2\cdot \frac{\sigma}{ (3-\alpha)\cdot (1-(1-\alpha)^R)}$.
We conclude the proof.
\end{proof}

The following lemma shows the running time and the approximation guarantee of Algorithm~\ref{alg:capped_push_flow}.
See~\refappendix{sec:detailed_proof_of_correctness_of_cap_ppr} for the proof.
\begin{lemma}\label{lem:correctness_cap_ppr}
\textsc{PushFlowCap}$(G,s,\alpha,\xi,\sigma, \tp)$ runs in $O(|E|\log(1/\xi)/\alpha)$ time.
Furthermore, if the minimum degree of $G$ is at least $\max\left(1/(\alpha\cdot T_s),\sqrt{1/(\alpha\cdot T_u)}\right)$ $(u\not=s)$, the output of \textsc{PushFlowCap}$(G,s,\alpha,\xi,\sigma, \tp)$ is exactly the same as the output of \textsc{PushFlow}$(G,s,\alpha,\xi)$, i.e., it is a $\xi$-approximate PPR vector for $\p(s)$.

\end{lemma}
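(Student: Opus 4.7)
The plan is, first, to verify the running-time bound by a routine count, and then to reduce the correctness claim to showing that the cap on line~6 never binds. The running-time bound is immediate: the outer loop runs $R = \lceil\ln(1/\xi)/\alpha\rceil = O(\log(1/\xi)/\alpha)$ times, and in each iteration the two inner \textbf{for}-loops together scan every $v\in S^{(i-1)}$ together with its $d(v)$ neighbors at most once, so the per-iteration cost is $O(|V|+|E|)=O(|E|)$. For the equivalence part I plan to show that, under the minimum-degree hypothesis, the $\min$ on line~6 of Algorithm~\ref{alg:capped_push_flow} always selects $r_v^{(i-1)}$ (i.e.\ the cap is never binding). Once this is established, a simple induction on $i$ shows that $p^{(i)},r^{(i)},S^{(i)}$ coincide with the corresponding quantities produced by Algorithm~\ref{alg:standard_push_flow}, since with $f_v^{(i)}=r_v^{(i-1)}$ the updates on lines 7--13 of Algorithm~\ref{alg:capped_push_flow} reproduce lines 7--8 of Algorithm~\ref{alg:standard_push_flow}; Lemma~\ref{lem:correctness_of_standard_ppr} then supplies the $\xi$-approximate PPR guarantee for free.

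Because $h_v^{(i)}$ is non-decreasing in $i$ by Observation~\ref{obs:properies_capped_pushflow}(1), it suffices to bound the total uncapped push $H_v := \sum_{i\ge 0} r_v^{(i)}$ by $d(v)T_v$ for every $v$. For the source, I will use $H_s = p_s/\alpha \le 1/\alpha$ (a consequence of Observation~\ref{obs:properies_capped_pushflow}(3) in the uncapped limit together with $p_s(s)\le 1$) combined with the hypothesis $d(s)\ge 1/(\alpha T_s)$; this gives $H_s\le d(s)T_s$, and is vacuous when $T_s=\infty$. For $v\ne s$, I will invoke the cap-free version of Lemma~\ref{lem:property_of_h}, which in the $i\to\infty$ limit yields the stationary identity $\tfrac{1+\alpha}{2}H_v = \tfrac{1-\alpha}{2}\sum_{u:(u,v)\in E} H_u/d(u)$. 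Setting $q_v := H_v/d(v)$ and $c := (1-\alpha)/(1+\alpha)\in(0,1)$, a first pass (taking a maximum over $v$) gives $q_s\le 1/(\alpha d(s))$ and $q_v\le c\,q_s$ for $v\ne s$. To sharpen this to the $\sqrt{\cdot}$ bound, I will feed the first-pass estimate back into the recursion, splitting the neighborhood of each $v\ne s$ into the at-most-one edge to $s$ and the remaining $d(v)-\mathbf{1}[v\sim s]$ non-source edges. The resulting self-bound $Q^\ast \le c\,q_s/d_{\min} + cQ^\ast$, with $Q^\ast := \max_{v\ne s} q_v$, solves to $Q^\ast = O(q_s/d_{\min}) = O(1/(\alpha d_{\min}^2))$. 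Hence $H_v\le d(v)Q^\ast = O(d(v)/(\alpha d_{\min}^2))$, and the non-source cap remains inactive whenever $d_{\min}^2 = \Omega(1/(\alpha T_u))$, matching the hypothesis.

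\textbf{Main obstacle.} The delicate step is the second pass of the neighborhood-splitting argument. The naive one-pass bound $H_v \le d(v)/(\alpha d(s))$ loses a factor of $d_{\min}$ and yields only the linear condition $d_{\min}\ge 1/(\alpha T_u)$, which is insufficient in the joint-DP case where $T_s=\infty$ and the source hypothesis cannot compensate. The quadratic improvement hinges on observing that, aside from the single edge to $s$, the flow reaching $v\ne s$ comes only from vertices whose $q_u$ has already been reduced by a factor $c$; carrying the geometric factor $c/(1-c) = (1-\alpha)/(2\alpha)$ through the fixed-point equation is what extracts the extra $1/d_{\min}$ and therefore produces the $\sqrt{1/(\alpha T_u)}$ in the lemma statement.
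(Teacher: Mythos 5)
Your running-time count and the high-level reduction are both correct and match the paper: reduce the equivalence claim to showing the cap on line~6 is never binding, observe that by Observation~\ref{obs:properies_capped_pushflow} it suffices to have $\sum_{j\ge 0}r_v^{(j)}\le d(v)T_v$ for the \emph{uncapped} run, and for the source use $H_s\le 1/\alpha$. Where you diverge from the paper is the non-source bound, and this is where there is a genuine quantitative gap. The paper does not set up a stationary fixed point; it proves the per-round pointwise bound $r_v^{(j)}\le (1-\alpha)^j/D$ for $v\ne s$ (Claim~\ref{cla:each_residual}), sums the geometric series to get $\sum_j r_v^{(j)}\le 1/(\alpha D)$, and observes that $d(v)T_v\ge D\cdot T_u = 1/(\alpha D)$ whenever $D\ge\sqrt{1/(\alpha T_u)}$, so the cap never triggers. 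This is tight at the stated threshold.

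Your two-pass fixed-point argument, by contrast, loses a multiplicative constant. Writing out the constants: $c=(1-\alpha)/(1+\alpha)$ gives $c/(1-c)=(1-\alpha)/(2\alpha)$, so the self-bound $Q^\ast\le c\,q_s/d_{\min}+cQ^\ast$ together with $q_s\le 1/(\alpha d(s))\le 1/(\alpha d_{\min})$ yields
\begin{equation*}
Q^\ast \;\le\; \frac{1-\alpha}{2\alpha}\cdot\frac{1}{\alpha d_{\min}^2}\;=\;\frac{1-\alpha}{2\alpha^2 d_{\min}^{2}}\,,
\end{equation*}
and the cap is guaranteed inactive only when $d_{\min}^2\ge \frac{1-\alpha}{2\alpha^2 T_u}$, i.e.\ $d_{\min}\ge\sqrt{\tfrac{1-\alpha}{2\alpha}}\cdot\sqrt{1/(\alpha T_u)}$. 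Since $\sqrt{(1-\alpha)/(2\alpha)}>1$ for every $\alpha<1/3$, in that regime your argument establishes the conclusion only under a \emph{strictly stronger} minimum-degree hypothesis than the lemma states, so as written it does not prove the lemma (and the paper's own experiments use $\alpha=0.08$, squarely in this range). The $O(\cdot)/\Omega(\cdot)$ notation in your ``Main obstacle'' paragraph masks exactly this constant.

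The underlying issue is the diagnosis in your ``Main obstacle'' paragraph: the quadratic gain does not actually come from isolating the single $s$-edge in each neighborhood. You would get a tighter (indeed sharper-than-the-paper) bound by skipping the per-$v^\ast$ neighborhood split entirely and instead using the global mass identity $\sum_{u\in V}H_u=\sum_{j\ge 0}\|r^{(j)}\|_1=1/\alpha$ (Claim~\ref{cla:total_residual}). Plugging this into your own stationary identity gives directly
\begin{equation*}
H_v \;=\; c\sum_{u:(u,v)\in E}\frac{H_u}{d(u)} \;\le\; \frac{c}{d_{\min}}\sum_{u\in V}H_u \;=\; \frac{c}{\alpha d_{\min}}\;<\;\frac{1}{\alpha d_{\min}}\,,
\end{equation*}
hence $H_v/d(v)\le c/(\alpha d_{\min}^2)\le T_u$ under the lemma's hypothesis with room to spare, for every $\alpha\in(0,1)$. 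This is morally the same move the paper makes inside Claim~\ref{cla:each_residual} (where $\sum_u r_u^{(j-1)}/d(u)$ is bounded by $\|r^{(j-1)}\|_1/D$ uniformly, without treating $s$ specially); the per-round form happens to compose more cleanly into the contradiction argument, but either version closes the gap that your two-pass splitting leaves open.
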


\begin{remark}
According to the choice of $T_s$ and $T_u$, the above lemma shows that if $\sigma \geq \Omega_{\alpha}(1/D^2)$ (resp. $\sigma \geq \Omega_{\alpha}(1/D)$), the output of \textsc{PushFlowCap}$(G,s,\alpha,\xi,\sigma, \tp=\joint)$ 
(resp. for $\tp = \njoint$) is a $\xi$-approximate PPR vector for $\p(s)$. 
This is near optimal since we show in~\refappendix{sec:tight_bound_of_sensitivity_of_ppr} that the joint sensitivity (resp. non-joint sensitivity) of the ground truth PPR of a graph with minimum degree $D$ can be at least $\Omega(1/D^2)$ (resp. $\Omega(1/D)$). 
\end{remark}

We apply Laplace mechanism to Algorithm~\ref{alg:capped_push_flow} and obtain a DP or joint DP PPR algorithm for general input graphs (see Algorithm~\ref{alg:dp_capped_push_flow}).
We conclude the theoretical guarantees of our algorithm.

\begin{algorithm}[h]
	\footnotesize
	\begin{algorithmic}[1]\caption{\textsc{DPPushFlowCap}$(G,s,\alpha,\xi,\sigma,\varepsilon,\tp)$}\label{alg:dp_capped_push_flow}
	\STATE \textbf{Input:} Graph $G=(V,E)$, source $s\in V$, teleport probability $\alpha$, precision $\xi$, sensitivity parameter $\sigma$, DP parameter $\varepsilon$, and $\tp\in \{\joint,\njoint\}$ indicating whether joint $\varepsilon$-DP or $\varepsilon$-DP is required.
	\STATE \textbf{Output:} $\varepsilon$-DP approximate PPR vector for $\p(s)$.
    \STATE $p\gets \textsc{PushFlowCap}(G,s,\alpha,\xi,\sigma, \tp)$.
    \STATE Let $Y_1,Y_2,\cdots,Y_n$ drawn independently from $\lap\left(\frac{\sigma}{\varepsilon}\right)$
    \STATE Output $p+(Y_1,Y_2,\cdots,Y_n)$.
	\end{algorithmic}
\end{algorithm}

\begin{corollary}[Joint DP and DP PPR]\label{cor:DPPushFlowCap}
The family of (personalized) algorithms $\{\mathcal{A}_s(G):=\textsc{DPPushFLowCap}(G,s,\alpha,\xi,\sigma,\varepsilon,\joint)\mid s\in V\}$ is joint $\varepsilon$-DP, and \textsc{DPPushFLowCap}$(G,s,\alpha,\xi,\sigma,\varepsilon,\njoint)$ is $\varepsilon$-DP with respect to $G$ for any $s\in V$.
In addition, if the input graph $G$ has a minimum degree at least $D$, the joint $\varepsilon$-DP (resp. $\varepsilon$-DP) output is a $\left(\xi, O_{\alpha,\varepsilon}(\sigma\ln \frac{n}{\delta})\right)$-approximate PPR with probability at least $1-\delta$ for any $\delta\in(0,1)$ when $\sigma \geq \Omega_{\alpha}(1/D^2)$ (resp. $\sigma\geq \Omega_{\alpha}(1/D)$).
\end{corollary}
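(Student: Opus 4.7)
The plan is to prove the corollary in two halves: privacy first, then accuracy. For the privacy claims, I would simply compose Theorem~\ref{thm:sensitivity_capped_pushflow} with the Laplace mechanism of Theorem~\ref{thm:laplace_mechanism}. Concretely, Theorem~\ref{thm:sensitivity_capped_pushflow} tells us that for $\tp = \njoint$ the function $G \mapsto \textsc{PushFlowCap}(G,s,\alpha,\xi,\sigma,\njoint)$ has (non-joint) sensitivity at most $\sigma$, and for $\tp = \joint$ the family $\{ G \mapsto \textsc{PushFlowCap}(G,s,\alpha,\xi,\sigma,\joint) \}_{s \in V}$ has joint sensitivity at most $\sigma$ (by Definition~\ref{def:sensitivity}, since the theorem bounds $\|p - p'\|_1$ uniformly over all $G' \in \Gamma_s(G)$ in the joint case and over all $G' \in \Gamma(G)$ in the non-joint case). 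Adding i.i.d.\ $\lap(\sigma/\varepsilon)$ noise to each of the $n$ coordinates then yields $\varepsilon$-DP (respectively joint $\varepsilon$-DP) by Theorem~\ref{thm:laplace_mechanism}.

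For the accuracy claim, the plan is to combine Lemma~\ref{lem:correctness_cap_ppr} with a tail bound for the Laplace noise. When $\tp = \joint$, the thresholds are $T_s = \infty$ and $T_u = \sigma/(2(2-\alpha))$ for $u \neq s$, so the condition of Lemma~\ref{lem:correctness_cap_ppr} reduces to requiring minimum degree at least $\sqrt{1/(\alpha T_u)} = \Theta(\sqrt{1/(\alpha\sigma)})$, which is exactly $\sigma \geq \Omega_\alpha(1/D^2)$. When $\tp = \njoint$, every $T_v = \sigma/(2(2-\alpha))$, so the binding term is $1/(\alpha T_s) = \Theta(1/(\alpha\sigma))$, i.e., $\sigma \geq \Omega_\alpha(1/D)$. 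Under either condition, Lemma~\ref{lem:correctness_cap_ppr} guarantees that the deterministic output $p$ of \textsc{PushFlowCap} coincides with \textsc{PushFlow}$(G,s,\alpha,\xi)$, which is a $\xi$-approximate PPR for $\p(s)$.

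It remains to bound the $\ell_\infty$ norm of the noise vector. By Fact~\ref{fac:laplace_bound}, a single $\lap(\sigma/\varepsilon)$ draw satisfies $\Pr[|Y_i| > (\sigma/\varepsilon)\ln(n/\delta)] = \delta/n$. A union bound over the $n$ coordinates yields $\Pr[\|(Y_1,\dots,Y_n)\|_\infty > (\sigma/\varepsilon)\ln(n/\delta)] \leq \delta$, so with probability at least $1-\delta$ the perturbed output $p + (Y_1,\dots,Y_n)$ is within $\ell_\infty$ distance $(\sigma/\varepsilon)\ln(n/\delta) = O_{\alpha,\varepsilon}(\sigma \ln(n/\delta))$ of the $\xi$-approximate PPR vector $p$, yielding the desired $(\xi, O_{\alpha,\varepsilon}(\sigma \ln(n/\delta)))$-approximation.

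None of the steps seem genuinely hard: the sensitivity theorem does all the heavy lifting for privacy, the correctness lemma does it for the unperturbed approximation, and the noise bound is a textbook Laplace tail plus union bound. The only mild subtlety I would flag is verifying that the two degree conditions in Lemma~\ref{lem:correctness_cap_ppr} translate correctly into the $\Omega_\alpha(1/D)$ versus $\Omega_\alpha(1/D^2)$ thresholds for the two values of $\tp$, by tracking the role of $T_s = \infty$ in the joint case (which removes the $1/(\alpha T_s)$ term from the $\max$), and to make sure the Laplace parameter $\sigma/\varepsilon$ indeed absorbs into the $O_{\alpha,\varepsilon}(\cdot)$ notation without hiding any dependence on the graph.
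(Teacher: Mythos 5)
Your proposal is correct and follows essentially the same route as the paper's proof: privacy from Theorem~\ref{thm:sensitivity_capped_pushflow} composed with the Laplace mechanism, and accuracy from Lemma~\ref{lem:correctness_cap_ppr} combined with Fact~\ref{fac:laplace_bound} and a union bound over the $n$ coordinates. Your careful tracking of how $T_s=\infty$ versus $T_s=\sigma/(2(2-\alpha))$ changes the binding term in the degree condition of Lemma~\ref{lem:correctness_cap_ppr} is exactly the detail the paper also relies on to separate the $\Omega_{\alpha}(1/D^2)$ and $\Omega_{\alpha}(1/D)$ regimes.
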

According to above corollary, we want the sensitivity parameter $\sigma$ to be as small as possible since smaller $\sigma$ leads to smaller additive error.
On the other hand, too small $\sigma$ may make the approximate PPR obtained by \textsc{PushFlowCap} (Algortihm~\ref{alg:capped_push_flow}) deviate  from the ground truth PPR.
Therefore, if a graph has a minimum degree $D$, our joint $\varepsilon$-DP PPR (resp. $\varepsilon$-DP PPR) algorithm provides the best theoretical approximation guarantees when $\sigma=\Theta_{\alpha}(1/D^2)$ (resp. $\sigma =\Theta_{\alpha}(1/D)$). 
 This matches the lower bound shown in~\refappendix{sec:tight_bound_of_sensitivity_of_ppr}:  the ground truth PPR has sensitivity $\Omega_{\alpha}(1/D)$ and joint sensitivity $\Omega_{\alpha}(1/D^2)$. 
 Thus, our results are actually theoretically optimal up to constant factors. 
 The implication of Corollary 4.6 is hence tight, and it is impossible to have a good theoretical approximation guarantee if $\sigma < o_{\alpha}(1/D)$ for $\varepsilon$-DP or $\sigma < o_{\alpha}(1/D^2)$ for joint $\varepsilon$-DP.
In the experimental section we show, however, that in practice the algorithm performs well for a vast range of the parameter setting.

\vspace{-0.1in}
\section{Differentially Private Graph Embeddings}\label{sec:DP_instant_embedding}
\vspace{-0.05in}
As we discussed before, PPR has plenty of applications in graph learning~\cite{klicpera2018predict,postuavaru2020instantembedding,bojchevski2020scaling}. In this section, to show the potentiality of our algorithms, we focus on a recent example of the use of PPR for computing graph embedding. 
We consider Instant\-Embedding~\cite{postuavaru2020instantembedding} (see Algorithm~\ref{alg:instant_embedding}) which is one practical PPR-based graph embedding algorithm. The algorithm proceeds computing the PPR vector of $s$ and then hashing them to obtain an embedding in $\mathbb{R}^k$ for the node $s$. Using our {\it DP} PPR algorithm output as input to Instant\-Embedding leads trivially to a DP embedding algorithm. However, in this section, we show a better implementation which reduces the amount of noise added using in a slightly more sophisticated way our {\it sensitivity bounded} PPR algorithm. We think that this technique could be adapted to other uses of PPR.

First we provide a sensitivity bound for the InstantEmbedding algorithm when applying the sensitivity-bounded PPR. As a result, we show how to obtain differentially private Instant\-Embedding. 
The proof of the following theorem (the sensitivity bound) can be found in~\refappendix{sec:sensitivity_of_embedding}.
\begin{algorithm}[h]
	\footnotesize
	\begin{algorithmic}[1]\caption{\textsc{InstantEmbedding}$(p,k)$}\label{alg:instant_embedding}
	\STATE \textbf{Input:} An approximate PPR vector $p$ for $\p(s)$, dimension $k$, and uniform random hash functions $h_k:V\rightarrow[k],h_{sgn}:V\rightarrow\{-1,1\}$.
	\STATE \textbf{Output:} Embedding vector $w\in\mathbb{R}^k$.
	\STATE Initialize $w\gets \mathbf{0}^k$.
	\FOR{$v\in V$}
	\STATE $w_{h_k(v)}\gets  w_{h_k(v)} + h_{sgn}(v)\cdot \max(\log(p_v\cdot n),0)$.
	\ENDFOR
	\STATE Output $w$.
	\end{algorithmic}
\end{algorithm}
\vspace{-0.1in}
\begin{theorem}[Sensitivity-bounded InstantEmbedding]\label{thm:sensitivity_instantembedding}
Consider two neighboring graphs $G=(V,E),G'=(V,E')$ and source node $s$.
Let $p,p'$ be approximate PPR vectors for $\p(s)$ with respect to $G$ and $G'$ respectively.
Let $w,$ $w'$ be the output of \textsc{InstantEmbedding}$(p,k)$ and \textsc{InstantEmbedding}$(p',k)$ respectively.
Let $m$ be the number of non-zero entries of $p-p'$.
Then, $\|w-w'\|_1\leq m\cdot \log \left(1+\|p-p'\|_1\cdot \frac{n}{m}\right)$.
\end{theorem}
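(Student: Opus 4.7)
The plan is to first reduce the $\ell_1$ distance between $w$ and $w'$ to a sum over vertices (eliminating the hash collisions), then establish a per-coordinate logarithmic bound, and finally use concavity of $\log$ to introduce the factor of $m$.

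First I would observe that for each bucket $j\in[k]$,
\[
|w_j - w'_j| = \Bigl|\sum_{v:h_k(v)=j} h_{sgn}(v)\bigl(\max(\log(p_v n),0)-\max(\log(p'_v n),0)\bigr)\Bigr|,
\]
which by the triangle inequality is at most $\sum_{v:h_k(v)=j}|\max(\log(p_v n),0)-\max(\log(p'_v n),0)|$. Summing over $j$, the partition induced by $h_k$ covers every vertex exactly once, and the signs $h_{sgn}(v)\in\{-1,+1\}$ drop out. Thus
\[
\|w-w'\|_1 \;\leq\; \sum_{v\in V}\bigl|\max(\log(p_v n),0)-\max(\log(p'_v n),0)\bigr|.
\]
Moreover, if $p_v=p'_v$ the corresponding term vanishes, so only the $m$ coordinates on which $p$ and $p'$ differ contribute.

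Next I would establish the pointwise inequality
\[
\bigl|\max(\log a,0)-\max(\log b,0)\bigr| \;\leq\; \log\bigl(1+|a-b|\bigr) \qquad \text{for all } a,b\geq 0.
\]
Using the identity $\max(\log x,0)=\log\max(x,1)$, I would check the three cases (both $\leq 1$; one on each side of $1$; both $\geq 1$) by straightforward algebra. The only nontrivial case is when both $a,b\geq 1$, where, assuming $a\geq b\geq 1$, one needs $a/b\leq 1+a-b$, i.e., $a(1-b)\leq b(1-b)$, which holds since $1-b\leq 0$ and $a\geq b$. Applying this with $a=p_v n$ and $b=p'_v n$ gives, for each differing coordinate,
\[
\bigl|\max(\log(p_v n),0)-\max(\log(p'_v n),0)\bigr| \;\leq\; \log\bigl(1+n|p_v-p'_v|\bigr).
\]

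Finally, I would invoke Jensen's inequality on the concave function $\log(1+\cdot)$ applied to the $m$ nonzero terms:
\[
\sum_{v:p_v\neq p'_v} \log\bigl(1+n|p_v-p'_v|\bigr)
\;\leq\; m\cdot\log\!\left(1+\frac{n}{m}\sum_{v:p_v\neq p'_v}|p_v-p'_v|\right)
\;=\; m\cdot\log\!\left(1+\frac{n}{m}\|p-p'\|_1\right),
\]
which is exactly the claimed bound. The main technical step is the pointwise logarithm inequality; the rest is bookkeeping (removing the hash from the picture) and a single application of Jensen, with the restriction of the sum to the $m$ nonzero coordinates being what converts the crude bound $n\|p-p'\|_1$ into the sharper $m\log(1+n\|p-p'\|_1/m)$.
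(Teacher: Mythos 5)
Your proposal is correct and follows essentially the same route as the paper: a triangle inequality to strip out the hash buckets and signs, a three-case analysis to show $|\max(\log a,0)-\max(\log b,0)|\leq\log(1+|a-b|)$ pointwise, and a single application of Jensen's inequality on the concave $\log(1+\cdot)$ over the $m$ differing coordinates. The only cosmetic difference is in the both-above-threshold case, where you clear denominators via $a(1-b)\leq b(1-b)$ rather than the paper's $\log(p_v/p'_v)=\log(1+(p_v-p'_v)/p'_v)\leq\log(1+n|p_v-p'_v|)$, but these are algebraically equivalent.
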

\vspace{-0.05in}

Non-zero entries of $p-p'$ in the above theorem can be at most $n$.
Thus, $\|w-w'\|_1$ is always at most $\|p-p'\|_1\cdot n$.
If we compute $p\gets \textsc{PushFlowCap}(G,s,\alpha,\xi,\sigma)$, according to Theorem~\ref{thm:sensitivity_capped_pushflow}, the (joint) sensitivity of $p$ is at most $\sigma$.
Then, according to Theorem~\ref{thm:sensitivity_instantembedding}, the (joint) sensitivity of output $w$ of Algorithm~\ref{alg:instant_embedding} (using as approximate ppr $p$) is at most $\sigma\cdot n$. 
This allows us to obtain a (joint) $\varepsilon$-DP version of InstantEmbedding by using the Laplace mechanism (Theorem~\ref{thm:laplace_mechanism}) to add $\lap\left(\log(1+\sigma)\cdot n/\varepsilon\right)$ to each entry of $w$.

\vspace{-0.1in}
\section{Experiments}\label{sec:exp}
\vspace{-0.1in}
In this section, we complement our theoretical analysis with an experimental study of our algorithms in terms of the accuracy of the PPR ranking and node classification. To foster the replicability of our work, we released open source a version of our code.\footnote{ \url{https://github.com/google-research/google-research/tree/master/private_personalized_pagerank} contains our codebase.}  In this experimental section we only consider the joint DP setting due to its practicality in  personalization applications typical of PPR and its better performance. Hence all private algorithms reported are joint DP.

\vspace{-0.1in}
\paragraph{Hyperparameter settings.}
Unless otherwise specified, we ran all experiments that use our \textsc{PushFlowCap} algorithm (or the non-private \textsc{PushFlow}  algorithm~\cite{andersen2007using}) to obtain PPR rankings consistent to the setting commonly used in the literature  of $\alpha=0.15.$\footnote{The algorithms use the {\it lazy} random walk~\cite{andersen2007using}, so we set $\alpha = 0.08$ to match the non-lazy $\alpha$ of $0.15$.} Moreover, we set $\xi$ so that the  number of iterations $R = 100$ in all algorithms. \ We use embedding dimensionality $k=256$.
We observe that our algorithm's utility is not strongly affected by the sensitivity parameter $\sigma$. For simplicity, we always set $\sigma=10^{-6}$, as this parameter generalized across different datasets tested. 
\vspace{-0.1in}
\paragraph{Additional Heuristic.}
We made a simple modification to our algorithm that we empirically observed to increase the performances without affecting the privacy guarantees. For the joint DP case, we push all flow from the source directly (bypassing the algorithm description). This corresponds to making the initial residual $r^{(0)}_s$ of source $s$ to $0$, and that of the neighbors of $s$ to $(1 - \alpha)^2 / d(s)$; and initializing the PPR $p^{(0)}$ of the source (respectively, neighbors) to $\alpha$ (resp., $\alpha\cdot(1-\alpha) / d(s)$). We report the experiments using this improvement.

\vspace{-0.05in}
\paragraph{Baselines.}
To the best of our knowledge, this is the first PPR paper with differential privacy. 
To compare with relevant joint DP baselines we use the standard randomized response~\cite{dwork2014algorithmic} (edge-flipping) baseline applied to the graph. Given the source node $s$, for each (unordered) pair $\{u,v\}$ of nodes ($u,v\neq s$) we apply the randomized response mechanism on the corresponding entry in the adjacency matrix: with probability $p = 2 / (1 + \exp( \epsilon))$, the entry is replaced by a uniform at random $\{0,1\}$, otherwise we keep the true entry. This results in a joint $\epsilon$-DP output over which we run the non-private PPR algorithm~\cite{andersen2007using}. 
Note that this baseline, for $\epsilon=\Theta(1)$, requires $\Theta(n^2)$ time to generate a DP adjacency matrix, and makes the output matrix dense. This limits the applicability of the baseline mechanism, whereas our approach remains scalable for larger graphs. 
\vspace{-0.05in}

\vspace{-0.1ex}
\begin{table}
\footnotesize
\centering
\vspace{-4mm} 
\newcolumntype{C}{>{\centering\arraybackslash}X}
\caption{Dataset characteristics: number of vertices \(|V|\), edges \(|E|\), node labels \(|\mathcal{L}|\); avg.\ degree; density as \(|E|/\binom{|V|}{2}\).\label{tbl:datasets}}
\begin{tabularx}{\linewidth}{p{1.5cm}CCC>{\centering\arraybackslash}p{1cm}>{\centering\arraybackslash}p{1.4cm}}
\toprule
\multicolumn{1}{c}{} & \multicolumn{3}{c}{\textbf{Size}} & \multicolumn{2}{c}{\textbf{Statistics}} \\
\cmidrule(lr){2-4}\cmidrule(lr){5-6}
\emph{dataset} & \(|V|\) & \multicolumn{1}{c}{\(|E|\)} & \(|\mathcal{L}|\) & \mbox{Avg.\ deg.} & Density \\
    \midrule
{\small POS} & 5k & 185k & 40 & 38.7 &  \mbox{\(8.1 \times 10^{-3}\)} \\
{\small Blogcatalog} & 10k & 334k & 39 & 64.8 &  \mbox{\(6.3 \times 10^{-3}\)} \\
\bottomrule
\end{tabularx}
\end{table}
\vspace{-0.1in}
\paragraph{Datasets.}
We experiment on 2 publicly available datasets
available from~\cite{grover2016node2vec,tsitsulin2018verse}.
Table~\ref{tbl:datasets} reports basic statistics about these datasets.
POS is a word co-occurrence network built from Wikipedia.
BlogCatalog a social network of bloggers from the blogcatalog website.

\vspace{-0.15in}
\subsection{PPR Approximation Accuracy}\label{sec:ppr-approximation}
\vspace{-0.15in}
We verify that our algorithms can rank the nodes of real-world graphs in a private way.
We use as seeds nodes with degree $50$ or more and compute ``ground-truth'' PPR values with the standard power iteration algorithm. Then, we run \textsc{DPPushFlowCap} algorithm in the joint DP setting, rerunning the algorithm $100$ times independently for each seed.
Figure~\ref{fig:ppr-approximation} presents the results on the datasets studied.
We examine the performance of Joint DP PPR from two standard metrics: Recall@k and normalized discounted cumulative gain (NDCG)~\cite{jarvelin2002cumulated}. We select Recall@100 and NDCG since the top PPR values are the most important in practical applications.
We observe that the joint DP rankings offer comparable top-k predictions across datasets and metrics tested than the private baseline.

\begin{figure}[t]
\centering
\begin{tikzpicture}
\begin{groupplot}[group style={
                      group name=myplot,
                      group size= 4 by 1,vertical sep=1.5cm, horizontal sep=1.15cm},
                      height=3cm,
                      width=.285\linewidth,
                      xlabel=$\varepsilon$,
                      xmin=0.7,
                      xmax=10,
                      ticklabel style = {font=\small},
                      y label style={at={(axis description cs:-0.25,.5)},anchor=south},
                      title style={at={(1.25,-.5)},anchor=north},
]
\nextgroupplot[
    xmin=0,
    xmax=5,
    ymax=100,
    ylabel=Recall@100,
 	title =\textbf{POS dataset},
]
\addplot[very thick,color=cycle3,mark=square*,mark repeat=10] table[y=dpppr-prepush,x=epsilon]  {data_final/pos-recall.tex};
\addplot[very thick,color=cycle10,mark=*,mark repeat=1] table[y=baseline,x=epsilon] {data_final/pos-recall.tex};
\nextgroupplot[
    ylabel=NDCG,
    ymax=100,
    xmin=0,
    xmax=5,
]
\addplot[very thick,color=cycle3,mark=square*,mark repeat=10] table[y=dpppr-prepush,x=epsilon]  {data_final/pos-ndcg.tex};
\addplot[very thick,color=cycle10,mark=*,mark repeat=1] table[y=baseline,x=epsilon] {data_final/pos-ndcg.tex};
\nextgroupplot[
    xmin=0,
    xmax=5,
    ymax=100,
    ylabel=Recall@100,
 	title =\textbf{Blog dataset},
]
\addplot[very thick,color=cycle3,mark=square*,mark repeat=10] table[y=dpppr-prepush,x=epsilon]  {data_final/blogcatalog-recall.tex};
\addplot[very thick,color=cycle10,mark=*,mark repeat=1] table[y=baseline,x=epsilon] {data_final/blogcatalog-recall.tex};
\nextgroupplot[
    ylabel=NDCG,
    ymax=100,
    xmin=0,
    xmax=5,
	legend style={at={(1.,1.15)},anchor=south east},
    legend entries={Joint DP PPR, Edge flipping joint DP baseline},
    legend columns=2,
    legend cell align=left,
 legend style={nodes={scale=0.9, transform shape}},
]
\addplot[very thick,color=cycle3,mark=square*,mark repeat=10] table[y=dpppr-prepush,x=epsilon]  {data_final/blogcatalog-ndcg.tex};
\addplot[very thick,color=cycle10,mark=*,mark repeat=1] table[y=baseline,x=epsilon] {data_final/blogcatalog-ndcg.tex};

\end{groupplot}
\end{tikzpicture}
\vspace{-0.3in}
\caption{\footnotesize \label{fig:ppr-approximation}PPR approximation on two datasets.}

\end{figure}
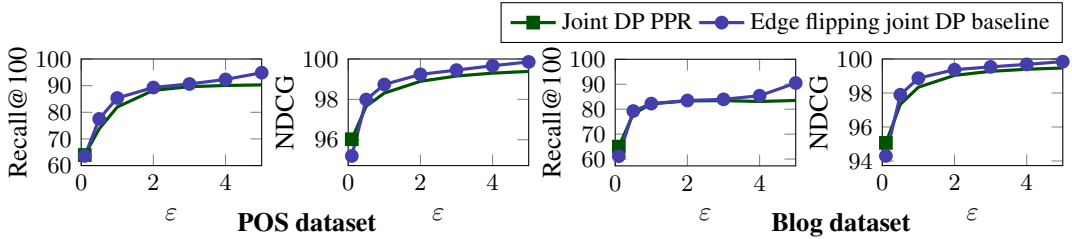
\begin{figure}[t]
\centering
\vspace{-0.2in}
\begin{tikzpicture}
\begin{axis}[
 height=3.75cm,
 width=0.7\linewidth,
 xlabel=$\varepsilon$,
 xmin=0.01,
 xmax=10,
 xmode=log,
 ticklabel style = {font=\small},
 y label style={at={(axis description cs:-0.15,.5)},anchor=south},
 ylabel=Micro F1 $\times 100$,
 legend style={nodes={scale=0.9, transform shape}},
 title style={at={(0.5,-.4)},anchor=north},
 legend style={at={(1,1.1)},anchor=south east},
 legend entries={Joint DP PPR, Joint DP Baseline + InstantEmbedding, Random, Non-private InstantEmbedding},
 legend columns=1,
 legend cell align=left,
]
\addplot[very thick,mark=*,color=cycle4] table {
0.01 16.6939
0.1 23.4103
0.5 23.5266
1 23.5168
2 23.5246
3 23.5296
4 23.5248
5 23.6052
6 23.5974
7 23.6309
8 23.5970
9 23.5167
10 23.5843
};
\addplot[very thick,color=cycle9,mark=square] table {
0.01 15.4598
0.1 15.5202
1 14.0937
2 12.2169
3 11.8794
4 12.7143
5 14.7460
6 18.5687
7 22.5679
8 26.5911
9 29.0269
10 30.3027
};
\addplot[very thick,color=cycle7] table {
0.01 12.4846
10.0 12.4846
};

\addplot[very thick,color=cycle6] table {
0.01 30.6802
10.0 30.6802
};
\end{axis}
\end{tikzpicture}
\vspace{-3.4ex}
\caption{\footnotesize \label{fig:embeddings}Private  embeddings introduced in this paper outperform other competitors and achieve much higher performance on a tight privacy budget. }

\end{figure}
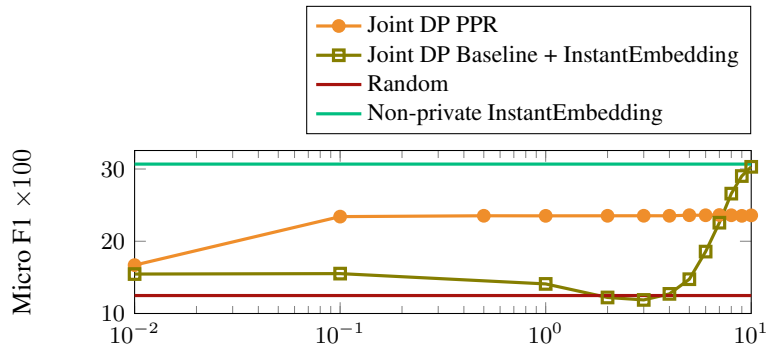

\vspace{-0.15in}
\subsection{Node Classification via Private Embeddings}\label{sec:node-classification-real}
\vspace{-0.15in}
Last, we examine the performance of our joint DP embedding algorithm.
We follow the procedure of~\cite{perozzi2014deepwalk,grover2016node2vec} and evaluate our embeddings on a node classification task in real-world graphs.
We report the results in Figure~\ref{fig:embeddings}. Here, Joint DP represents our DP InstantEmbedding algorithm; DP Baseline + InstantEmbedding represents the obvious baseline of computing Baseline DP PPR followed by the hashing procedure; Non-DP InstantEmbedding is the original non-DP embedding and Random represents a uniform random embedding. Our joint DP algorithm has significantly better performance than the baseline and has non-trivial Micro-F1 even for small $\varepsilon$. In contrast, the baseline does not extract any useful information from the graph in this range of $\varepsilon$.
\vspace{-0.15in}
\section{Conclusion}
\vspace{-0.15in}
We showed that it is possible to approximate PPR  with DP. We believe that the techniques developed for bounding the sensitivity of PPR can find applications in other areas of graph-learning.  As a future work, we would like to extend the use of our DP PPR algorithm to more machine learning tasks. 

\vspace{-0.1in}
\paragraph{Societal impact and limitations}
Our work focuses on developing DP algorithms. DP provides strong protection, but has limitations \cite{dwork2014algorithmic}. Moreover, while privacy is a requirement of a responsible computational system, it is not the only one.

\section{Acknowledgements and corrections from the published version}

We thank Simone Antonelli and Aleksandar Bojchevski for identifying an error in a prior version of the paper.

For this reason, we made a series of corrections and updates to the paper after the publication at NeurIPS 2022.

\begin{enumerate}
    \item We have corrected  all experimental results to account for an error in the code.
    \item We strengthen the theoretical analysis of the algorithm resulting in slightly improved constants.
    \item We introduce a heuristic that improves the empirical results in real-world graphs while preserving the same joint-DP privacy guarantees.
    \item We made some other small text adjustments.
\end{enumerate}

\bibliographystyle{plain}
\bibliography{ref}

\ifquestionnaire
\section*{Checklist}


\begin{enumerate}

\item For all authors...
\begin{enumerate}
  \item Do the main claims made in the abstract and introduction accurately reflect the paper's contributions and scope?
    \answerYes{}
  \item Did you describe the limitations of your work?
    \answerYes{}
  \item Did you discuss any potential negative societal impacts of your work?
    \answerYes{}
  \item Have you read the ethics review guidelines and ensured that your paper conforms to them?
    \answerYes{}
\end{enumerate}

\item If you are including theoretical results...
\begin{enumerate}
  \item Did you state the full set of assumptions of all theoretical results?
    \answerYes{}
        \item Did you include complete proofs of all theoretical results?
    \answerYes{Some proofs are in the supplementary material for lack of space.}
\end{enumerate}

\item If you ran experiments...
\begin{enumerate}
  \item Did you include the code, data, and instructions needed to reproduce the main experimental results (either in the supplemental material or as a URL)?
    \answerNo{The data is publicly available. The code will be released open source by the time of the camera ready.} 
  \item Did you specify all the training details (e.g., data splits, hyperparameters, how they were chosen)?
    \answerYes{}
        \item Did you report error bars (e.g., with respect to the random seed after running experiments multiple times)?
    \answerYes{Erorr bars are omitted from plots when too small to appear clearly. We report more extended results on variance of the results in the Appendix.}
        \item Did you include the total amount of compute and the type of resources used (e.g., type of GPUs, internal cluster, or cloud provider)?
    \answerNo{While our experiments use  the proprietary distributed computing facilities of our institution, each individual run of our algorithm is on standard commodity hardware.}
\end{enumerate}

\item If you are using existing assets (e.g., code, data, models) or curating/releasing new assets...
\begin{enumerate}
  \item If your work uses existing assets, did you cite the creators?
    \answerYes{}
  \item Did you mention the license of the assets?
    \answerYes{We use publicly available datasets. The license is described on the external sites.}
  \item Did you include any new assets either in the supplemental material or as a URL?
    \answerNA{}
  \item Did you discuss whether and how consent was obtained from people whose data you're using/curating?
    \answerNA{}
  \item Did you discuss whether the data you are using/curating contains personally identifiable information or offensive content?
    \answerNA{We use standard public datasets.}
\end{enumerate}

\item If you used crowdsourcing or conducted research with human subjects...
\begin{enumerate}
  \item Did you include the full text of instructions given to participants and screenshots, if applicable?
    \answerNA{}
  \item Did you describe any potential participant risks, with links to Institutional Review Board (IRB) approvals, if applicable?
    \answerNA{}
  \item Did you include the estimated hourly wage paid to participants and the total amount spent on participant compensation?
    \answerNA{}
\end{enumerate}

\end{enumerate}
\fi


\ifappendix
\newpage
\appendix
\onecolumn
\section{Tightness of the Sensitivity Bound for PPR}\label{sec:tight_bound_of_sensitivity_of_ppr}
In this section, we show examples that the sensitivity of a PPR vector of a graph with minimum degree $D$ can be $\Omega(1/D)$ and the sensitivity for joint DP can be $\Omega(1/D^2)$.
For simplicity, consider $\alpha = 0.5$. 
Consider a clique with $D+1$ nodes.
Thus, each node has degree $D$.
Let us choose an arbitrary node as the source node.
Let $p$ be the PPR vector for $s$.
Then it is easy to verify $p_s=\frac{2D+1}{3D+1}$ and $p_v=\frac{1}{3D+1}$ for $v\not=s$.

Suppose we remove the edge between node $s$ and node $x$.
Let $p'$ be the new PPR vector for $s$.
We can verify that $p'_s=\frac{6D+5}{9D+6},p'_x=\frac{1}{9D+6}$ and $p'_v=\frac{D}{3D^2-D+2}$ for $v\not=s,x$.
It is easy to see that $|p_x-p'_x|=|1/(3D+1)-1/(9D+6)|$ is already $\Omega(1/D)$.

Suppose we remove the edge between nodes $x,y\not=s$. 
Let $p''$ be the new PPR vector for $s$.
We can verify that $p''_s= \frac{6D^3+D^2-5D}{9D^3-7D-2},p''_x=p''_y=\frac{1}{3D+2},$ and $p''_v=\frac{3D^2-D}{9D^3-7D-2}$ for $v\not=x,y$.
It is easy to see that $|p_x-p''_x|=|1/(3D+1)-1/(3D+2)|$ is already $\Omega(1/D^2)$.

\section{Proof of Lemma~\ref{lem:correctness_of_standard_ppr}}\label{sec:detailed_proof_of_correctness_of_standard_ppr}
As defined in Section~\ref{sec:preli_ppr}, let $W$ be the lazy random walk matrix of $G$.
\begin{lemma}[Linearity of PPR vector~\cite{andersen2007using}]\label{lem:linearity_of_ppr}
Given any $x\in\mathbb{R}^n$, $\p(x)\cdot W=\p(x\cdot W)$.
Given any $x,y\in\mathbb{R}^n$, $\p(x+y)=\p(x)+\p(y)$.
Given any $x\in\mathbb{R}^n,a\in\mathbb{R}$, $\p(a\cdot x)=a\cdot \p(x)$.
\end{lemma}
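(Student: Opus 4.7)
The plan is to derive all three identities from a single fact: the defining recursion $\p(s) = \alpha s + (1-\alpha)\p(s)W$ has a unique solution. Once uniqueness is in hand, each identity reduces to checking that a specific candidate row vector satisfies the defining equation for the right argument, and then appealing to uniqueness.

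First I would establish existence and uniqueness (and, as a free byproduct, a closed form). Rewrite the recursion as $p(I - (1-\alpha)W) = \alpha s$, so it suffices to show that $I - (1-\alpha)W$ is invertible. Since $W = \tfrac{1}{2}(I + \Lambda^{-1}A)$ is row-stochastic with nonnegative entries, one has $\|xW\|_1 \le \|x\|_1$ for every row vector $x$, and hence $\|(1-\alpha)\,xW\|_1 \le (1-\alpha)\|x\|_1$ with $1-\alpha < 1$. The Neumann series $\sum_{k\ge 0}(1-\alpha)^k W^k$ therefore converges absolutely in $\ell_1$ and furnishes a two-sided inverse of $I-(1-\alpha)W$. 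Consequently the fixed-point equation has the unique solution
\[
\p(s) \;=\; \alpha\,s\sum_{k=0}^{\infty}(1-\alpha)^k W^k.
\]

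The three identities now follow by plug-and-verify against this unique fixed point. For commutation with $W$, set $q := \p(x)\,W$ and use the defining equation for $\p(x)$:
\[
q = \p(x)W = [\alpha x + (1-\alpha)\p(x)W]\,W = \alpha(xW) + (1-\alpha)\,qW,
\]
so $q$ solves the defining equation with input $xW$, and uniqueness yields $q = \p(xW)$. For additivity, set $q := \p(x) + \p(y)$ and add the two defining equations: $q = \alpha(x+y) + (1-\alpha)\,qW$, whence $q = \p(x+y)$. For homogeneity, set $q := a\,\p(x)$ and multiply the defining equation for $\p(x)$ by $a$: $q = \alpha(ax) + (1-\alpha)\,qW$, whence $q = \p(ax)$.

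The main obstacle, such as it is, is purely bookkeeping: justifying convergence/uniqueness of the fixed point. Because $\alpha \in (0,1)$ and $W$ is a stochastic operator on row vectors with respect to $\|\cdot\|_1$, the contraction estimate is standard and no genuine technical difficulty arises. After that, the three identities are one-line consequences of distributivity combined with uniqueness of the fixed point.
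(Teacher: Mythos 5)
Your proof is correct. The paper itself gives no proof of this lemma---it is cited directly from Andersen et al.---so there is nothing to diverge from; your argument (invertibility of $I-(1-\alpha)W$ via the Neumann series, which gives both the closed form $\p(s)=\alpha s\sum_{k\ge 0}(1-\alpha)^kW^k$ and uniqueness of the fixed point, followed by plug-and-verify for each identity) is exactly the standard derivation and is complete. One small point worth stating explicitly: the contraction estimate $\|xW\|_1\le\|x\|_1$ uses that $W$ is row-stochastic and that you multiply row vectors on the right, which matches the paper's convention $\p(s)=\alpha s+(1-\alpha)\p(s)W$.
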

Similar to Lemma 3.4 of~\cite{andersen2007using}, we have the following lemma.
\begin{lemma}[Push flow operation]\label{lem:push_flow_guarantee}
Consider $n$-node graph $G=(V,E)$.
Let $s\in\mathbb{R}^n$ be the starting distribution vector.
Given vectors $p\in\mathbb{R}^n$ and $r\in\mathbb{R}^n$ satisfying $p=\p(s-r)$.
For any $x\in\mathbb{R}^n$, if $p',r'$ are computed as the following:
\begin{enumerate}
\item $\forall v\in V,p'_v = p_v+\alpha\cdot x_v$,
\item $\forall v\in V,r'_v = r_v-x_v+(1-\alpha)/2\cdot \left(x_v+\sum_{(v,u)\in E}x_u/d(u)\right)$,
\end{enumerate}
then we have $p'=\p(s-r')$.
\end{lemma}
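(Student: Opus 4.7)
The plan is to reduce proving $p' = \p(s - r')$ to evaluating $\p(r - r')$ via the linearity of PPR (Lemma~\ref{lem:linearity_of_ppr}). Since $p' = p + \alpha x$ by the update on $p$, and $p = \p(s - r)$ by hypothesis, it suffices to show that $\p(r - r') = \alpha x$; then linearity immediately gives $\p(s - r') = \p(s - r) + \p(r - r') = p + \alpha x = p'$.

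The first step is to rewrite the coordinate-wise definition of $r'$ in matrix form. Since $W = \tfrac{1}{2}(I + \Lambda^{-1} A)$, we have $(x\cdot W)_v = \tfrac{1}{2} x_v + \tfrac{1}{2}\sum_{(u,v)\in E} x_u / d(u)$, so the update on the residual is exactly $r' = r - x + (1-\alpha)\, x\cdot W$, i.e., $r - r' = x - (1-\alpha)\, x \cdot W$. This is the clean algebraic form to feed into the PPR operator.

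Applying linearity twice then gives $\p(r - r') = \p(x) - (1-\alpha)\, \p(x \cdot W) = \p(x) - (1-\alpha)\, \p(x) \cdot W$. But the recursive definition $\p(x) = \alpha x + (1-\alpha)\, \p(x)\cdot W$ rearranges to exactly $\alpha x = \p(x) - (1-\alpha)\, \p(x) \cdot W$, yielding $\p(r - r') = \alpha x$ as required.

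The argument is essentially mechanical once the residual update is written in matrix form, so I do not expect a substantive obstacle; the only point requiring care is the bookkeeping that converts the coordinate-wise expression for $r'_v$ into $r' = r - x + (1-\alpha)\, x\cdot W$, together with the observation that Lemma~\ref{lem:linearity_of_ppr} permits $s, r, r', x$ to be arbitrary vectors in $\mathbb{R}^n$ (not necessarily probability distributions), so every $\p(\cdot)$ expression above is well-defined.
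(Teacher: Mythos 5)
Your proposal is correct and follows essentially the same route as the paper: both rewrite the residual update in matrix form as $r' = r - x + (1-\alpha)\,x W$ and then combine the linearity lemma with the recursive definition of $\p(\cdot)$ to conclude. You organize the calculation as a direct verification that $\p(r - r') = \alpha x$, whereas the paper derives the equivalent identity $\p(r) = \p(r') + \alpha x$ by a forward chain of substitutions, but the lemmas invoked and the underlying algebra are identical.
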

\begin{proof}
We can rewrite the computation of $p',r'$ as the following:
\begin{enumerate}
    \item $p'=p+\alpha\cdot x$,
    \item $r'=r-x+(1-\alpha)\cdot x\cdot W$.
\end{enumerate}
Thus, we have:
\begin{align*}
\p(r)&=\p(r-x)+\p(x)\\
&=\p(r-x) + \alpha x + (1-\alpha)\p(x)W\\
&=\p(r-x) + \alpha x + \p((1-\alpha)xW)\\
&=\p(r-x+(1-\alpha)xW) +\alpha x\\
&=\p(r')+p'-p,
\end{align*}
where the first equality follows from linearity (Lemma~\ref{lem:linearity_of_ppr}), the second equality follows from the definition of PPR $\p(x)$, the third and the forth equalities follow from linearity (Lemma~\ref{lem:linearity_of_ppr}) again, and the last equality follows from the definition of $p'$ and $r'$.

Therefore, we have $p+\p(r)=p'+\p(r')$.
Due to linearity (Lemma~\ref{lem:linearity_of_ppr}), we have $p-\p(s-r)=p'-\p(s-r')$.
Since $p=\p(s-r)$, $p'=\p(s-r')$.
\end{proof}
We are now able to prove Lemma~\ref{lem:correctness_of_standard_ppr}.
\begin{proof}[Proof of Lemma~\ref{lem:correctness_of_standard_ppr}]
Let us first consider the running time.
The algorithm has $R=O(\ln(1/\xi)/\alpha)$ iterations. 
In each iteration, the algorithm can visit each neighbor of each node at most once.
Thus, the running time needed for each iteration is $O(|E|)$.
The total running time is $O(|E|\cdot \ln(1/\xi)/\alpha)$.

Next, let us focus on the accuracy of the output.
Notice that $\forall i\in[R]$, we have
\begin{enumerate}
\item $\forall v\in V,p^{(i)}_v=p^{(i-1)}_v+\alpha\cdot r_v^{(i-1)}$,
\item $\forall v\in V,r^{(i)}_v=(1-\alpha)/2\cdot \left(r_v^{(i-1)}+\sum_{(v,u)\in E}r_u^{(i-1)}/d(u)\right)$.
\end{enumerate}
Since $p^{(0)}=\p(s-r^{(0)})$, according to Lemma~\ref{lem:push_flow_guarantee}, $\forall i\in [R],p^{(i)}=\p(s-r^{(i)})$.
Next we want to show that each entry of $r^{(R)}$ is non-negative and is at most $\xi$.
It is easy to verify that any operation during the algorithm will not create a negative value of any entry of $r^{(j)}$ for $j\in\{0,1,\cdots,R\}$.
Next consider the maximum value of $r^{(j)}$ for $j\in\{0,1,\cdots,R\}$.
The proof is by induction. 
It is obvious that $\|r^{(0)}\|_1=1$.
Consider $j>0$. We have $\|r^{(j)}\|_1=\sum_{v\in V}r^{(j)}_v=(1-\alpha)/2\cdot \sum_{v\in V}r^{(j-1)}_v+(1-\alpha)/2\cdot \sum_{v\in V}\sum_{u:(u,v)\in E} r^{(j-1)}_{u}/d(u)=(1-\alpha)/2\cdot \sum_{v\in V}r^{(j-1)}_v+(1-\alpha)/2\cdot \sum_{u\in V} d(u)\cdot r^{(j-1)}_{u}/d(u)$ = $(1-\alpha)\cdot \|r^{(j-1)}\|_1\leq (1-\alpha)^j$.
Thus, $\|r^{(R)}\|_{\infty}\leq \|r^{(R)}\|_1\leq (1-\alpha)^R\leq \xi$.

According to Definition~\ref{def:xi_approx_ppr}, since $p^{(R)}=\p(s-r^{(R)})$ and each entry of $r^{(R)}$ is non-negative and is at most $\xi$, $p^{(R)}$ is an $\xi$-approximate PPR vector for $\p(s)$.
\end{proof}

\section{Bound of $r'_x,r'_y$ in the Proof of Theorem~\ref{thm:sensitivity_standard_pushflow}}\label{sec:proof_of_claims}
\begin{claim}\label{cla:total_residual}
$\forall j\in[R],\|r'^{(j)}\|_1\leq (1-\alpha)^j$.
\end{claim}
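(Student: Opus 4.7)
The plan is to prove Claim~\ref{cla:total_residual} by a straightforward induction on $j$, leveraging exactly the same flow-conservation calculation that already appears in the proof of Lemma~\ref{lem:correctness_of_standard_ppr} in \refappendix{sec:detailed_proof_of_correctness_of_standard_ppr}. Throughout, all entries of $r'^{(j)}$ are non-negative (each update only adds non-negative quantities to an initially non-negative vector), so $\|r'^{(j)}\|_1 = \sum_v r'^{(j)}_v$.

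For the base case $j=0$, we have $r'^{(0)} = e_s$, so $\|r'^{(0)}\|_1 = 1 = (1-\alpha)^0$. For the inductive step, I unfold the residual update from Algorithm~\ref{alg:standard_push_flow}: for each $v \in V$,
\begin{equation*}
r'^{(j)}_v \;=\; \tfrac{1-\alpha}{2}\, r'^{(j-1)}_v \;+\; \sum_{u:\,(u,v)\in E'} \tfrac{1-\alpha}{2}\cdot \tfrac{r'^{(j-1)}_u}{d'(u)}.
\end{equation*}
Summing over $v$ and swapping the order of summation in the second term,
\begin{equation*}
\|r'^{(j)}\|_1 \;=\; \tfrac{1-\alpha}{2}\,\|r'^{(j-1)}\|_1 \;+\; \tfrac{1-\alpha}{2}\sum_{u\in V} d'(u)\cdot \tfrac{r'^{(j-1)}_u}{d'(u)} \;=\; (1-\alpha)\,\|r'^{(j-1)}\|_1,
\end{equation*}
where I used that $u$ appears in the inner sum once per edge incident to it, i.e.\ exactly $d'(u)$ times. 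Applying the inductive hypothesis $\|r'^{(j-1)}\|_1 \leq (1-\alpha)^{j-1}$ yields $\|r'^{(j)}\|_1 \leq (1-\alpha)^j$, completing the induction.

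There is no real obstacle here: the argument is just flow-conservation plus non-negativity, and it gives equality (not merely an inequality) provided nothing is ever dropped, which is the case for the vanilla \textsc{PushFlow} algorithm. The only point to be mildly careful about is to sum over $E'$ (not $E$) and use $d'(\cdot)$ (not $d(\cdot)$), since we are analyzing the execution on $G'$; this is why the cancellation between $d'(u)$ in the numerator count and $d'(u)$ in the denominator works out cleanly. The companion bounds used in the body of the proof of Theorem~\ref{thm:sensitivity_standard_pushflow}, namely $r'^{(i-1)}_x, r'^{(i-1)}_y \leq (1-\alpha)^{i-1}$ in general and $\leq (1-\alpha)^{i-1}/D$ when $s \neq x,y$, then follow: the first from $r'^{(i-1)}_x \leq \|r'^{(i-1)}\|_1$, and the second by observing that for $v \neq s$ the only way $v$ receives residual is through a neighbor push scaled by $1/d'(\cdot) \leq 1/D$, so one additional factor of $1/D$ can be pulled out of the bound on $\|r'^{(i-1)}\|_1$.
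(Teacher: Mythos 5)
Your proof is correct and is essentially the same as the paper's: both argue by induction on $j$, unfold the residual update, swap the order of summation so that the $d'(u)$ in the denominator cancels against the number of edges incident to $u$, and conclude $\|r'^{(j)}\|_1=(1-\alpha)\|r'^{(j-1)}\|_1$. The extra remarks about non-negativity and the companion pointwise bound are fine but not part of the paper's proof of this particular claim.
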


\begin{proof}[Proof of Claim~\ref{cla:total_residual}]
The proof is by induction. 
It is obvious that $\|r'^{(0)}\|_1=1$.
Consider $j>0$. We have $\|r'^{(j)}\|_1=\sum_{v\in V}r'^{(j)}_v=(1-\alpha)/2\cdot \sum_{v\in V}r'^{(j-1)}_v+(1-\alpha)/2\cdot \sum_{v\in V}\sum_{u:(u,v)\in E'} r'^{(j-1)}_{u}/d'(u)=(1-\alpha)/2\cdot \sum_{v\in V}r'^{(j-1)}_v+(1-\alpha)/2\cdot \sum_{u\in V} d'(u)\cdot r'^{(j-1)}_{u}/d'(u)$ = $(1-\alpha)\cdot \|r'^{(j-1)}\|_1\leq (1-\alpha)^j$.
\end{proof}

\begin{claim}\label{cla:each_residual}
$\forall j\in [R],\forall u\not=s$, $r'^{(j)}_u\leq (1-\alpha)^j/D$.
\end{claim}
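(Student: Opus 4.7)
The plan is to prove Claim~\ref{cla:each_residual} by induction on $j$, with Claim~\ref{cla:total_residual} serving as the key auxiliary bound. First I would write out the explicit update rule that Algorithm~\ref{alg:standard_push_flow} imposes on the residual: for any $u \neq s$ (so that the source term does not reappear),
\[
r'^{(j)}_u \;=\; \tfrac{1-\alpha}{2}\, r'^{(j-1)}_u \;+\; \tfrac{1-\alpha}{2} \sum_{v : (v,u)\in E'} \frac{r'^{(j-1)}_v}{d'(v)}.
\]
The base case $j=0$ is immediate since $r'^{(0)}_u = 0$ for every $u\neq s$.

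For the inductive step I would bound the two terms separately. For the first term, the inductive hypothesis directly gives $\tfrac{1-\alpha}{2} r'^{(j-1)}_u \leq \tfrac{(1-\alpha)^j}{2D}$. The second term is the one place where the argument cannot simply iterate the hypothesis, because one of the neighbors $v$ may equal $s$, and $r'^{(j-1)}_s$ is not controlled by the inductive hypothesis (it can be as large as $1$). This is the main obstacle. The resolution is to use the minimum degree assumption $d'(v) \geq D$ uniformly to pull out a factor $1/D$, and then invoke Claim~\ref{cla:total_residual} to bound the remaining sum by the total $\ell_1$ mass of $r'^{(j-1)}$:
\[
\sum_{v:(v,u)\in E'} \frac{r'^{(j-1)}_v}{d'(v)} \;\leq\; \frac{1}{D}\sum_{v:(v,u)\in E'} r'^{(j-1)}_v \;\leq\; \frac{\|r'^{(j-1)}\|_1}{D} \;\leq\; \frac{(1-\alpha)^{j-1}}{D}.
\]
Multiplying by $\tfrac{1-\alpha}{2}$ gives $\tfrac{(1-\alpha)^j}{2D}$.

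Combining the two contributions yields $r'^{(j)}_u \leq \tfrac{(1-\alpha)^j}{2D} + \tfrac{(1-\alpha)^j}{2D} = \tfrac{(1-\alpha)^j}{D}$, which closes the induction. The only subtlety to double-check is that the update rule I write down correctly accounts for the fact that Algorithm~\ref{alg:standard_push_flow} pushes from \emph{all} nodes in $S^{(i-1)}$ in a single round (rather than one at a time), so that $r'^{(j)}_u$ really does equal the displayed expression with no additional cross-terms that could spoil the inductive bound.
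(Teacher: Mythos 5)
Your proof is correct and follows essentially the same route as the paper: both induct on $j$, split the update rule for $r'^{(j)}_u$ into the self-term and the neighbor-sum term, bound the former by the inductive hypothesis, and bound the latter via the minimum-degree assumption together with Claim~\ref{cla:total_residual}. Your write-up is a bit more explicit in identifying why the neighbor containing $s$ forces the appeal to Claim~\ref{cla:total_residual} rather than the inductive hypothesis, but the underlying argument is identical.
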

\begin{proof}[Proof of Claim~\ref{cla:each_residual}]
The proof is by induction.
When $j=0$, $\forall u\not=s,r'^{(0)}_u = 0$.
Consider $j>0$.
We have $r'^{(j)}_u\leq (1-\alpha)/2\cdot r'^{(j-1)}_u+(1-\alpha)/2\cdot \sum_{v\in V}r'^{(j-1)}_v/d'(v)\leq (1-\alpha)/2\cdot r_u'^{(j-1)}+(1-\alpha)/2\cdot \|r'^{(j-1)}\|_1/D$.
According to Claim~\ref{cla:total_residual} and induction hypothesis, we have $r'^{(j)}_u\leq (1-\alpha)\cdot (1-\alpha)^{j-1}/D=(1-\alpha)^j/D$.
\end{proof}

\section{Na{\"i}ve DP PPR for High Degree Graphs}\label{sec:DPPushFlow}

\begin{algorithm}[h]
	\small
	\begin{algorithmic}[1]\caption{\textsc{DPPushFlow}$(G,s,\alpha,\xi,\varepsilon)$}\label{alg:dp_standard_push_flow}
	\STATE \textbf{Input:} Graph $G=(V,E)$ with minimum degree at least $D$, source node $s\in V$, teleport probability $\alpha$, precision $\xi$, DP parameter $\varepsilon$.
	\STATE \textbf{Output:} $\varepsilon$-DP approximate PPR vector for $\p(s)$.
    \STATE $p\gets \textsc{PushFlow}(G,s,\alpha,\xi)$.
    \STATE // If considering joint $\varepsilon$-DP:
    \STATE Let $Y_1,Y_2,\cdots,Y_n$ drawn independently from $\lap\left(\frac{2(1-\alpha)}{\varepsilon\cdot \alpha\cdot D^2}\right)$
    \STATE // Otherwise for $\varepsilon$-DP:
    \STATE Let $Y_1,Y_2,\cdots,Y_n$ drawn independently from $\lap\left(\frac{2(1-\alpha)}{\varepsilon\cdot \alpha\cdot D}\right)$
    \STATE Output $p+(Y_1,Y_2,\cdots,Y_n)$.
	\end{algorithmic}
\end{algorithm}

\begin{corollary}\label{cor:DPPushFlow}
Suppose the input graph $G$ is guaranteed to have minimum degree at least $D$. 
For any given source node $s$, the output of \textsc{DPPushFLow}$(G,s,\alpha,\xi,\varepsilon)$ is $\varepsilon$-DP and is a $\left(\xi, O_{\alpha,\varepsilon}(D^{-1}\ln \frac{n}{\delta})\right)$-approximate PPR for $\p(s)$ with probability at least $1-\delta$ for any $\delta\in(0,1)$.
The family of (personalized) algorithms $\{\mathcal{A}_s(G):=\textsc{DPPushFLow}(G,s,\alpha,\xi,\varepsilon)\mid s\in V\}$ is joint $\varepsilon$-DP and the output of $\mathcal{A}_s(G)$ is a $\left(\xi, O_{\alpha,\varepsilon}(D^{-2}\ln \frac{n}{\delta})\right)$-approximate PPR for $\p(s)$ with probability at least $1-\delta$ for any $\delta\in(0,1)$.
\end{corollary}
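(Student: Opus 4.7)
The plan is to combine the three ingredients already established: the sensitivity bound of \textsc{PushFlow} from Theorem~\ref{thm:sensitivity_standard_pushflow}, the privacy guarantee of the Laplace mechanism from Theorem~\ref{thm:laplace_mechanism}, and the tail bound for Laplace noise from Fact~\ref{fac:laplace_bound}. The accuracy side is then essentially a coordinate-wise argument leveraging Lemma~\ref{lem:correctness_of_standard_ppr}.

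For the privacy guarantee, I would argue as follows. In the non-joint case, Theorem~\ref{thm:sensitivity_standard_pushflow} gives that for any $G, G'\in\Gamma(G)$ both of minimum degree $\geq D$, we have $\|p - p'\|_1 \leq 2(1-\alpha)/(\alpha D)$, where $p = \textsc{PushFlow}(G,s,\alpha,\xi)$ and similarly for $p'$. The function $f(G) = \textsc{PushFlow}(G,s,\alpha,\xi)$ therefore has sensitivity (in the sense of Definition~\ref{def:sensitivity}) at most $2(1-\alpha)/(\alpha D)$. Applying Theorem~\ref{thm:laplace_mechanism} with noise scale $2(1-\alpha)/(\varepsilon \alpha D)$ yields $\varepsilon$-DP. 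For the joint DP case, Theorem~\ref{thm:sensitivity_standard_pushflow} instead bounds $\|p-p'\|_1 \leq 2(1-\alpha)/(\alpha D^2)$ whenever $G'\in\Gamma_s(G)$, i.e., this is the joint sensitivity of the family $\{f_s\}_{s\in V}$. Applying the joint version of Theorem~\ref{thm:laplace_mechanism} with noise scale $2(1-\alpha)/(\varepsilon \alpha D^2)$ gives joint $\varepsilon$-DP.

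For the utility guarantee, by Lemma~\ref{lem:correctness_of_standard_ppr}, the (pre-noise) output $p = \textsc{PushFlow}(G,s,\alpha,\xi)$ satisfies $p = \p(s-r)$ for some non-negative $r$ with $\|r\|_\infty \leq \xi$. The released vector is $\widehat{p} = p + (Y_1,\dots,Y_n)$ with $Y_i \sim \lap(b)$ independent, where $b = 2(1-\alpha)/(\varepsilon \alpha D)$ in the non-joint case and $b = 2(1-\alpha)/(\varepsilon \alpha D^2)$ in the joint case. Thus $\|\widehat{p} - \p(s-r)\|_\infty = \max_i |Y_i|$. Applying Fact~\ref{fac:laplace_bound} with $\delta' = \delta/n$ and taking a union bound over the $n$ coordinates gives $\Pr[\max_i |Y_i| > b\ln(n/\delta)] \leq \delta$. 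This yields the claimed $(\xi, O_{\alpha,\varepsilon}(D^{-1}\ln(n/\delta)))$- or $(\xi, O_{\alpha,\varepsilon}(D^{-2}\ln(n/\delta)))$-approximation.

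There is no real obstacle here; the corollary is essentially a direct composition of results already proved earlier in the paper. The only subtlety worth double-checking is the union-bound rescaling of the failure probability by $n$, and that Theorem~\ref{thm:sensitivity_standard_pushflow} genuinely gives a bound on the $\ell_1$-sensitivity (not just $\ell_\infty$), so the Laplace mechanism as stated in Theorem~\ref{thm:laplace_mechanism} applies without modification.
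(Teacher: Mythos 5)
Your proposal is correct and follows essentially the same route as the paper's proof: sensitivity from Theorem~\ref{thm:sensitivity_standard_pushflow} plus the Laplace mechanism for privacy, and Lemma~\ref{lem:correctness_of_standard_ppr} combined with Fact~\ref{fac:laplace_bound} and a union bound over $n$ coordinates for accuracy. Your side remark confirming that Theorem~\ref{thm:sensitivity_standard_pushflow} yields an $\ell_1$ bound is exactly the right thing to check.
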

\begin{proof}[Proof of Corollary~\ref{cor:DPPushFlow}]
The $\varepsilon$-DP guarantee follows from Laplace mechanism (Theorem~\ref{thm:laplace_mechanism}).
The sensitivity bound is given by Theorem~\ref{thm:sensitivity_standard_pushflow}.
Next, consider the accuracy of Algorithm~\ref{alg:dp_standard_push_flow}.
According to Lemma~\ref{lem:correctness_of_standard_ppr}, the vector $p$ is a $\xi$-approximate PPR vector. 
Let $\Delta$ be the sensitivity of $p$, i.e., $\Delta=\frac{2\cdot(1-\alpha)}{\alpha\cdot D^2}$ if joint DP is considered, and $\Delta=\frac{2\cdot(1-\alpha)}{\alpha\cdot D}$ otherwise.
Consider $i\in[n]$. 
By Fact~\ref{fac:laplace_bound}, with probability at least $1-\delta/n$, $|Y_i|\leq \frac{\Delta}{\varepsilon}\cdot \ln(n/\delta)$.
By taking union bound over $i\in[n]$, with probability at least $1-\delta$,  $\max_{i\in[n]} |Y_i|\leq \frac{\Delta}{\varepsilon}\cdot \ln(n/\delta)$.
Thus, the output of Algorithm~\ref{alg:dp_standard_push_flow} is a $\left(\xi,\frac{\Delta}{\varepsilon}\cdot\ln(n/\delta)\right)$-approximate PPR for $\p(s)$ with probability at least $1-\delta$.
\end{proof}

%
\section{Proof of Lemma~\ref{lem:property_of_h}}\label{sec:bound_of_cumulative_residual}

\begin{proof}[Proof of Lemma~\ref{lem:property_of_h}]
According to the description of Algorithm~\ref{alg:capped_push_flow}, we have $\forall i\in[R],v\in V,r_v^{(i)}=\frac{1-\alpha}{2}\cdot\left(f_v^{(i)}+\underset{u:(u,v)\in E}{\sum}\frac{f_u^{(i)}}{d(u)}\right)$.
Due to $h_v^{(i)}=\sum_{j=1}^i f_v^{(j)}$ and Observation~\ref{obs:properies_capped_pushflow}, $\sum_{j=0}^{i-1} r_v^{(j)} = r_v^{(0)}+\frac{1-\alpha}{2}\cdot\left(h_v^{(i-1)}+\underset{u:(u,v)\in E}{\sum} \frac{h_u^{(i-1)}}{d(u)}\right)$.
For Observation~\ref{obs:properies_capped_pushflow} again, we can conclude $h_v^{(i)}=\min\left(r_v^{(0)}+\frac{1-\alpha}{2}\cdot \left(h_v^{(i-1)}+ \underset{u:(u,v)\in E}{\sum} \frac{h_u^{(i-1)}}{d(u)}\right),d(v)\cdot T_v\right)$.
\end{proof}

\section{Proof of Lemma~\ref{lem:correctness_cap_ppr}}\label{sec:detailed_proof_of_correctness_of_cap_ppr}
\begin{proof}[Proof of Lemma~\ref{lem:correctness_cap_ppr}]
Let us consider the running time. 
The algorithm has $R$ iterations.
In each iteration, the algorithm can visit each neighbor of each node at most once.
Thus, the running time for each iteration is $O(|E|)$.
The total running time is $O(|E|\cdot R)=O(|E|\cdot \ln(1/\xi)/\alpha)$.

Let $D=\max\left(1/(\alpha T_s), \sqrt{1/(\alpha T_u)}\right)$ ($u\not=s$).
In the remaining of the proof, we consider the input graph $G$ with minimum degree at least $D$.
Firstly, we observe that the value $R$ is the same in Algorithm~\ref{alg:standard_push_flow} and Algorithm~\ref{alg:capped_push_flow}, i.e., both algorithms have the same number of iterations.
Then notice that if $f_v^{(i)}$ is equal to $r_v^{(i-1)}$ for all $i$ and $v$, then we have 
$\forall i\in[R]$,
\begin{enumerate}
\item $\forall v\in V,p^{(i)}_v=p^{(i-1)}_v+\alpha\cdot r_v^{(i-1)}$,
\item $\forall v\in V,r^{(i)}_v=(1-\alpha)/2\cdot \left(r_v^{(i-1)}+\sum_{(v,u)\in E}r_u^{(i-1)}/d(u)\right)$.
\end{enumerate}
Therefore, when $f_v^{(i)}$ is equal to $r_v^{(i-1)}$ for all $i$ and $v$, vectors $p^{(i)},r^{(i)}$ in both Algorithm~\ref{alg:standard_push_flow} and Algorithm~\ref{alg:capped_push_flow} are equal respectively.
Thus, when $f_v^{(i)}$ is equal to $r_v^{(i-1)}$ for all $i$ and $v$, the output of Algorithm~\ref{alg:capped_push_flow} is the same as the output of Algorithm~\ref{alg:standard_push_flow}.
It suffices to show that $\forall i\in [R],v\in V,f_v^{(i)}=r_v^{(i-1)}$ when $G$ has minimum degree at least $D$.

The proof is by contradiction.
Consider the first time that $f_v^{(i)}\not=r_v^{(i-1)}$ during the execution of Algorithm~\ref{alg:capped_push_flow}.
We have $d(v)\cdot T_v-h_v^{(i-1)}<r_v^{(i-1)}$.
Since it is the first time $f_v^{(i)}\not=r_v^{(i-1)}$, we know $h_v^{(i-1)}=\sum_{j=0}^{i-2}r_v^{(j)}$ according to Observation~\ref{obs:properies_capped_pushflow}.
Therefore, we know that $\sum_{j=0}^{i-1} r_v^{(j)} > d(v)\cdot T_v$.
If $v\not=s$, according to the choice of $T_v$, we have $\sum_{j=0}^{i-1} r_v^{(j)}>D/(\alpha\cdot D^2)=1/(\alpha D)$  which contradicts to Claim~\ref{cla:each_residual}.
If $v=s$, according to the choice of $T_v$, we have $\sum_{j=0}^{i-1} r_v^{(j)}>D/(\alpha\cdot D)=1/\alpha$  which contradicts to Claim~\ref{cla:total_residual}.
Thus, we always have $\forall i\in [R],v\in V,f_v^{(i)}=r_v^{(i-1)}$ when $G$ has minimum degree at least $D$.
We conclude that the output of Algorithm~\ref{alg:capped_push_flow} is the same as the output of Algorithm~\ref{alg:standard_push_flow} in this case.
According to Lemma~\ref{lem:correctness_of_standard_ppr}, the output is a $\xi$-approximate PPR vector.
\end{proof}


\section{Proof of Corollary~\ref{cor:DPPushFlowCap}}
\begin{proof}[Proof of Corollary~\ref{cor:DPPushFlowCap}]
Since the joint sensitivity and non-joint sensitivity of Algorithm~\ref{alg:capped_push_flow} are given by Theorem~\ref{thm:sensitivity_capped_pushflow}.
Thus, the $\varepsilon$-DP and joint $\varepsilon$-DP guarantees follow from the Laplace mechanism (Theorem~\ref{thm:laplace_mechanism}).

Next, consider the accuracy of Algorithm~\ref{alg:dp_capped_push_flow}.
If considering joint $\varepsilon$-DP, according to Lemma~\ref{lem:correctness_cap_ppr}, the output $p$ of \textsc{PushFlowCap} (Algorithm~\ref{alg:capped_push_flow}) is a $\xi$-approximate PPR vector when the minimum degree of $G$ is at least $\sqrt{(2\cdot(2-\alpha))/(\alpha\cdot \sigma)}$.
Otherwise, the output $p$ of \textsc{PushFlowCap} (Algorithm~\ref{alg:capped_push_flow}) is a $\xi$-approximate PPR vector when the minimum degree of $G$ is at least $(2\cdot(2-\alpha))/(\alpha\cdot \sigma)$.
Consider $i\in[n]$. 
By Fact~\ref{fac:laplace_bound}, with probability at least $1-\delta/n$, $|Y_i|\leq \frac{\sigma}{\varepsilon}\cdot \ln(n/\delta)$.
By taking union bound over $i\in[n]$, with probability at least $1-\delta$,  $\max_{i\in[n]} |Y_i|\leq \frac{\sigma}{\varepsilon}\cdot \ln(n/\delta)$.
Thus, the output of Algorithm~\ref{alg:dp_capped_push_flow} is a $\left(\xi,\frac{\sigma}{\varepsilon}\cdot\ln(n/\delta)\right)$-approximate PPR for $\p(s)$ with probability at least $1-\delta$.
\end{proof}


\section{Proof of Theorem~\ref{thm:sensitivity_instantembedding}}\label{sec:sensitivity_of_embedding}
\begin{proof}[Proof of Theorem~\ref{thm:sensitivity_instantembedding}]
For $v\in V$, consider $|\max(\log(p_v\cdot n),0)-\max(\log(p'_v\cdot n),0)|$.
There are several cases.
Without loss of generality, we suppose $p_v>p'_v$.
In the first case, if both $p_v,p'_v\leq 1/n$, then we have $|\max(\log(p_v\cdot n),0)-\max(\log(p'_v\cdot n),0)|=0\leq \log(1+|p_v-p'_v|\cdot n)$.
In the second case, $p_v> 1/n$ and $p'_v\leq 1/n$. 
Then, $|\max(\log(p_v\cdot n),0)-\max(\log(p'_v\cdot n), 0)|=\log(1+(p_v-1/n)\cdot n)\leq \log(1+|p_v-p'_v|\cdot n)$.
In the third case, both $p_v,p'_v>1/n$.
Then $|\max(\log(p_v\cdot n),0)-\max(\log(p'_v\cdot n), 0)|=\log(p_v/p'_v)=\log(1+(p_v-p'_v)/p'_v)\leq \log(1+|p_v-p'_v|\cdot n)$.
By combining the above cases, we always have:
\begin{align}
&|\max(\log(p_v\cdot n),0)-\max(\log(p'_v\cdot n),0)|\notag\\
\leq& \log(1+|p_v-p'_v|\cdot n)\label{eq:bound_of_one_embedding_entry}
\end{align}
Now, we are able to bound $\|w-w'\|_1$.
\begin{align*}
&\|w-w'\|_1\\
=&\sum_{i=1}^k |\sum_{v\in V:h_k(v)=i}h_{sgn}(v)\cdot (\max(\log(p_v\cdot n),0)\\
&-\max(\log(p'_v\cdot n),0))|\\
\leq & \sum_{v\in V:p_v\not=p'_v} |\max(\log(p_v\cdot n),0)-\max(\log(p'_v\cdot n),0)|\\
\leq &\sum_{v\in V:p_v\not=p'_v} \log(1+|p_v-p'_v|\cdot n)\\
\leq & m\cdot \log(1+\|p-p'\|_1\cdot n/m),
\end{align*}
where the first inequality follows from triangle inequality, the second inequality follows from Equation~\eqref{eq:bound_of_one_embedding_entry}, and the last inequality follows from the concavity of $\log(\cdot)$.
\end{proof}

\section{Differentially Private InstantEmbedding via Sparse Personalized PageRank}
\paragraph{Theoretically Improved InstantEmbedding via Sparse Personalized PageRank.}
Notice that due to Theorem~\ref{thm:sensitivity_instantembedding}, a sparser approximate PPR may give lower sensitivity of the InstantEmbedding. 
Therefore, we show how the embedding algorithm can be further improved by sparsifying the PPR vector in a differentially private way. The sparsification procedure is reported in Algorithm~\ref{alg:sparsification} which keeps large entries with good probabilites and will drop small entries of the input vector.

\begin{lemma}\label{lem:eps_dp_sparsification}
\textsc{DPSparsification}$(p,\sigma,\varepsilon,\gamma)$ is $\varepsilon$-DP.
\end{lemma}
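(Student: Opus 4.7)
The plan is to show that \textsc{DPSparsification} is equivalent, in distribution, to the composition of the Laplace mechanism with a data-independent post-processing step, after which $\varepsilon$-DP follows immediately from Theorem~\ref{thm:laplace_mechanism} and the post-processing property of differential privacy. Specifically, I would argue that running \textsc{DPSparsification}$(p,\sigma,\varepsilon,\gamma)$ produces the same distribution on $S\subseteq[n]$ as the following two-step procedure: (i) draw $Y_1,\ldots,Y_n$ independently from $\lap(\sigma/\varepsilon)$ and form the vector $\tilde p = p+(Y_1,\ldots,Y_n)$; (ii) output $S = \{i\in[n] : \tilde p_i > \gamma\}$.

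To verify this equivalence, I would compute, for each coordinate $i$, the probability $\Pr[Y_i > \gamma - p_i]$ using the Laplace cumulative distribution function stated in the preliminaries with scale $b = \sigma/\varepsilon$. Case-splitting on the sign of $\gamma - p_i$ yields exactly the inclusion probabilities used in Algorithm~\ref{alg:sparsification}: when $p_i \leq \gamma$ we have $\gamma-p_i\geq 0$, so $\Pr[Y_i > \gamma - p_i] = \tfrac{1}{2}\exp\!\left(-\tfrac{\varepsilon}{\sigma}(\gamma - p_i)\right)$; when $p_i > \gamma$ we have $\gamma-p_i<0$, so $\Pr[Y_i > \gamma - p_i] = 1 - \tfrac{1}{2}\exp\!\left(\tfrac{\varepsilon}{\sigma}(\gamma - p_i)\right)$. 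Because the $Y_i$ are mutually independent and the algorithm also decides inclusion of each $i$ independently with the matching Bernoulli probability, the joint distributions over $S$ coincide.

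Finally, since the underlying function producing $p$ is assumed to have $\ell_1$ sensitivity at most $\sigma$, Theorem~\ref{thm:laplace_mechanism} implies that the intermediate vector $\tilde p$ is $\varepsilon$-DP. The subsequent coordinate-wise thresholding depends only on $\tilde p$ and the publicly fixed parameter $\gamma$, so $S$ is a post-processing of an $\varepsilon$-DP output, and hence $\varepsilon$-DP itself.

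I do not expect a substantive obstacle. The only points requiring care are (a) that the DP guarantee is stated relative to the underlying input whose map to $p$ has sensitivity at most $\sigma$, so that Theorem~\ref{thm:laplace_mechanism} applies verbatim; and (b) the boundary case $p_i = \gamma$, where both branches of the algorithm give inclusion probability $1/2$, matching $\Pr[Y_i > 0] = 1/2$, so the procedure is well-defined.
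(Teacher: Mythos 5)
Your proof is correct and is a valid, slightly more modular route to the same conclusion. Your key observation — that the per-coordinate inclusion probability in Algorithm~\ref{alg:sparsification} equals $\Pr[Y_i > \gamma - p_i]$ for $Y_i\sim\lap(\sigma/\varepsilon)$ — is in fact implicitly the same identity the paper uses: the paper writes $\Pr[i\in S]$ as $\int_{-\infty}^{p_i}\frac{\varepsilon}{2\sigma}e^{-\frac{\varepsilon}{\sigma}|x-\gamma|}\,\mathrm{d}x$, which is exactly $\Pr[\gamma+Y_i\leq p_i]=\Pr[Y_i>\gamma-p_i]$ by symmetry of the Laplace distribution. The two proofs diverge after this point. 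The paper stays at the level of per-coordinate probability ratios: it bounds $\Pr[i\in S]/\Pr[i\in S']\leq\exp\big(\tfrac{\varepsilon}{\sigma}|p_i-p_i'|\big)$ (and likewise for $\Pr[i\notin S]/\Pr[i\notin S']$) by a direct Lipschitz estimate on the Laplace density, then multiplies across coordinates and uses $\|p-p'\|_1\leq\sigma$ to get the factor $e^{\varepsilon}$. You instead lift the whole algorithm to ``add $\lap(\sigma/\varepsilon)$ noise, then threshold at $\gamma$,'' apply the packaged Laplace-mechanism guarantee (Theorem~\ref{thm:laplace_mechanism}), and finish with post-processing closure. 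Your route is cleaner and more conceptual; the paper's is more self-contained since post-processing closure, while entirely standard, is not one of the DP facts stated in the preliminaries (only the Laplace mechanism and composition are). If you wanted your version to stand fully on the paper's stated toolkit, you would need to add a one-line justification of post-processing (or note that thresholding is a data-independent deterministic map, so the ratio bound from the Laplace mechanism transfers directly to any event defined in terms of $S$). Your handling of the boundary case $p_i=\gamma$ is also correct and shows the two branches of the algorithm agree there, so no ambiguity arises.
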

\begin{proof}
Consider a neighboring vector $p'$ of $p$ i.e., $\|p-p'\|_1\leq \sigma$.
Let $S$ and $S'$ be the output of \textsc{DPSparsification}$(p,\sigma,\varepsilon,\gamma)$ and \textsc{DPSparsification}$(p',\sigma,\varepsilon,\gamma)$ respectively.
For $i\in[n],$  $\Pr[i\in S]/\Pr[i\in S']=\int_{-\infty}^{p_i}\frac{\varepsilon}{2\sigma}e^{-\frac{\varepsilon}{\sigma}|x-\gamma|}\mathrm{d}x/\int_{-\infty}^{p_i'}\frac{\varepsilon}{2\sigma}e^{-\frac{\varepsilon}{\sigma}|x-\gamma|} \mathrm{d}x$.
Notice that $\exp\left(-\frac{\varepsilon}{\sigma}|x-\gamma|\right)/\exp\left(-\frac{\varepsilon}{\sigma}|x+p_i'-p_i-\gamma|\right)\leq \exp\left(\frac{\varepsilon}{\sigma}|p_i'-p_i|\right)$.
Therefore, $\Pr[i\in S]/\Pr[i\in S']\leq \exp\left(\frac{\varepsilon}{\sigma}|p_i'-p_i|\right)$.
By similar argument, we can also prove that $\Pr[i\not\in S]/\Pr[i\not\in S']\leq \exp\left(\frac{\varepsilon}{\sigma}|p_i'-p_i|\right)$.
Hence, $\forall X\subseteq[n]$, $\Pr[S=X]/\Pr[S'=X]\leq \exp\left(\frac{\varepsilon}{\sigma}\|p'-p\|_1\right)\leq \exp(\varepsilon)$ that concludes the proof.
\end{proof}

\begin{lemma}[Sparisity of \textsc{DPSparsification}$(p,\sigma,\varepsilon,\gamma)$]\label{lem:sparsity_dp_sparsifiction}
If $\|p\|_1\leq 1$ and $\gamma\geq \frac{3\sigma}{\varepsilon}\ln(n)$, with probability at least $1-1/n$, the output $S$ of \textsc{DPSparsification}$(p,\sigma,\varepsilon,\gamma)$ satisfies (1) $|S|\leq 3/\gamma$, (2) $\forall i \in S, p_i\geq \gamma/3$, (3) $\forall i$ with $p_i\geq 2\gamma$, $i\in S$.
\end{lemma}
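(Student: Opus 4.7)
The plan is to establish the three claims separately and then combine them via a single union bound over the indices. The key observation driving the analysis is that the inclusion probability in Algorithm~\ref{alg:sparsification} is an exponential function of the ``gap'' $|p_i - \gamma|$, so with the choice $\gamma \geq \frac{3\sigma}{\varepsilon}\ln n$ both the failure probability of excluding heavy entries and of including light entries will be polynomially small in $n$.

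First I would prove claim~(3). Fix an index $i$ with $p_i \geq 2\gamma$. By the algorithm, the probability that $i \notin S$ equals $\tfrac{1}{2}\exp\bigl(\tfrac{\varepsilon}{\sigma}(\gamma - p_i)\bigr) \leq \tfrac{1}{2}\exp\bigl(-\tfrac{\varepsilon}{\sigma}\gamma\bigr) \leq \tfrac{1}{2}n^{-3}$ using $\gamma \geq \tfrac{3\sigma}{\varepsilon}\ln n$. A union bound over at most $n$ indices shows that claim~(3) fails with probability at most $\tfrac{1}{2}n^{-2}$.

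Next I would prove claim~(2) by contrapositive: I show that with high probability no index $i$ with $p_i < \gamma/3$ ends up in $S$. For such an $i$, the inclusion probability is $\tfrac{1}{2}\exp\bigl(-\tfrac{\varepsilon}{\sigma}(\gamma - p_i)\bigr) \leq \tfrac{1}{2}\exp\bigl(-\tfrac{2\varepsilon\gamma}{3\sigma}\bigr) \leq \tfrac{1}{2}n^{-2}$, again using the choice of $\gamma$. Union bounding over at most $n$ indices yields a total failure probability of at most $\tfrac{1}{2n}$ for claim~(2). Claim~(1) is then immediate from claim~(2): conditioned on claim~(2) holding, every $i \in S$ satisfies $p_i \geq \gamma/3$, and since $\|p\|_1 \leq 1$ and $p$ is nonnegative (as an approximate PPR vector), there can be at most $3/\gamma$ such indices.

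Combining the two failure events by a final union bound gives total failure probability at most $\tfrac{1}{2n^2} + \tfrac{1}{2n} \leq \tfrac{1}{n}$, as required. There is no real obstacle here beyond arithmetic book-keeping of the exponents; the only mild subtlety is that claim~(1) relies on conditioning on the event from claim~(2), so the three claims must be argued jointly against the single failure event rather than each with independent probability $1-1/n$.
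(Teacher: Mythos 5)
Your proof is correct and takes essentially the same route as the paper: for each of the two threshold cases bound the per-index error probability by $O(n^{-2})$ using the choice of $\gamma$, union bound over indices, and note that claim (1) follows from claim (2) together with $\|p\|_1\leq 1$ and nonnegativity of $p$. The only (inconsequential) difference is that you use the sharper $n^{-3}$ per-index bound for claim (3) where the paper settles for $n^{-2}$.
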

\begin{proof}
If $p_i<\gamma/3$, since $\gamma \geq \frac{3\sigma}{\varepsilon}\ln(n)$, the probability that $i$ is added to $S$ is at most $\frac{1}{2}\exp(-2\ln n)=1/(2n^2)$.
By taking union bound over all $i\in[n]$, with probability at least $1/(2n)$, $\forall i$ with $p_i<\gamma/3$, $i\not\in S$.
Since $\|p\|_1\leq 1$, $|S|\leq 3/\gamma$.
If $p_i\geq2\gamma$, the probability that $i$ is added to $S$ is at least $1-1/(2n^2)$.
By taking union bound over all $i\in[n]$, with probability at least $1/(2n)$, $\forall i$ with $p_i\geq 2\gamma$, $i\in S$.
\end{proof}

As a side result, we obtain DP sparse approximate PPR vector by applying composition (Theorem~\ref{thm:adv_comp}) of the sparsification (Algorithm~\ref{alg:sparsification}) and Laplace mechanism (Theorem~\ref{thm:laplace_mechanism}) on our senstivity-bounded PPR vector (Algorithm~\ref{alg:capped_push_flow}).
We refer readers to~\refappendix{sec:proof_of_dp_sparse_ppr} for more details.
\begin{algorithm}[h]
	\small
	\begin{algorithmic}[1]\caption{\textsc{DPSparsification}$(p,\sigma,\varepsilon,\gamma)$}\label{alg:sparsification}
	\STATE \textbf{Input:} An (approximate PPR) vector $p\in\mathbb{R}^n$, a sensitivity upper bound $\sigma$ of $p$, a parameter $\varepsilon$ for DP, and a threshold $\gamma$.
	\STATE \textbf{Output:} An $\varepsilon$-differentially private set of indices $S\subseteq [n]$.
	\STATE Initialize $S\gets \emptyset$.
	\STATE For each $i\in[n]$, if $p_i\leq \gamma$, add $i$ into $S$ with probability $\frac{1}{2}\cdot \exp\left(-\frac{\varepsilon}{\sigma}\cdot (\gamma-p_i)\right)$, otherwise add $i$ into $S$ with probability $1-\frac{1}{2}\cdot\exp\left(\frac{\varepsilon}{\sigma}\cdot(\gamma-p_i)\right).$
	\STATE Output $S$.
	\end{algorithmic}
\end{algorithm}
In Algorithm~\ref{alg:dp_instantembedding_via_sparse_ppr}, we show how to get DP InstantEmbedding via DP sparse approximate PPR vector.
\begin{algorithm}[h]
	\small
	\begin{algorithmic}[1]\caption{\textsc{DPEmbeddingSparse}$(G,s,\alpha,\xi,\sigma,k,\varepsilon,\tp)$}\label{alg:dp_instantembedding_via_sparse_ppr}
	\STATE \textbf{Input:} Graph $G=(V,E)$, source $s\in V$, teleport probability $\alpha$, precision $\xi$, sensitivity parameter $\sigma$, embedding dimension $k$, DP parameter $\varepsilon$, and $\tp\in \{\joint,\njoint\}$ indicating whether joint $\varepsilon$-DP or $\varepsilon$-DP is required.
	\STATE \textbf{Output:} $\varepsilon$-DP $k$-dimensional embedding vector.
	\STATE $\varepsilon_0\gets \varepsilon/2$.
    \STATE $\hat{p}\gets \textsc{PushFlowCap}(G,s,\alpha,\xi,\sigma,\tp)$.
    \STATE $S\gets \textsc{DPSparsification}\left(\hat{p},\sigma,\varepsilon_0,\frac{3\sigma}{\varepsilon_0}\ln n\right)$.
    \STATE Construct $p$ such that $p_i=\hat{p}_i$ for $i\in S$ and $p_i=0$ for $i\in[n]\setminus S$.
    \STATE $w\gets \textsc{InstantEmbedding}(p,k)$.
    \STATE Draw $Y_1,Y_2,\cdots,Y_k$ independently from $\lap(|S|\cdot \log\left(1+\sigma\cdot n/|S|\right)/\varepsilon_0)$.
    \STATE Output $w+(Y_1,Y_2,\cdots,Y_k)$.
	\end{algorithmic}
\end{algorithm}
We record the theoretical guarantees of the algorithm in Theorem~\ref{thm:sparse_dp_embedding}. 
\begin{theorem}\label{thm:sparse_dp_embedding}
The family of (personalized) algorithms $\{\mathcal{A}_s(G):=\textsc{DPEmbeddingSparse}(G,s,\alpha,\xi,\sigma,k,\varepsilon,\joint)\mid s\in V\}$ is joint $\varepsilon$-DP, and \textsc{DPEmbeddingSparse}$(G,s,\alpha,\xi,\sigma,k,\varepsilon,\njoint)$ is $\varepsilon$-DP with respect to $G$ for any $s\in V$.
In addition, if the input graph $G$ has a minimum degree at least $D$, then the joint $\varepsilon$-DP (resp. $\varepsilon$-DP) output is a $(\xi,O_{\alpha,\varepsilon}(\sigma\log(n)))$-approximate PPR for $\p(s)$ with $O_{\alpha,\varepsilon}\left(\frac{1}{\sigma\log n}\right)$ non-zero entries with probability at least $1-O(1/n)$ when $\sigma\geq \Omega_{\alpha}(1/D^2)$ (resp. $\sigma\geq \Omega_{\alpha}(1/D)$). 
\end{theorem}

\begin{proof}
Firstly, let us consider the DP guarantee. 
According to Lemma~\ref{lem:eps_dp_sparsification}, $S$ is $\varepsilon_0$-DP.
Since $p$ has $|S|$ non-zero entries, the sensitivity of $w$ is at most $|S|\cdot\log(1+\sigma\cdot n/|S|)$.
Due to Laplace mechanism (Theorem~\ref{thm:laplace_mechanism}), given $S$, the final output is $\varepsilon_0$-DP.
Since $S$ is also $\varepsilon_0$-DP, according to composition theorem (Theorem~\ref{thm:adv_comp}), the overall algorithm is $(\varepsilon_0+\varepsilon_0)$-DP, i.e., $\varepsilon$-DP.

Due to the proof of Theorem~\ref{thm:dp_sparse_ppr} (see Appendix~\ref{sec:proof_of_dp_sparse_ppr}), with probability at least $1-O(1/n)$, $p$ is a $(\xi,O(\sigma\log(n)/\varepsilon))$-approximate PPR vector for $\p(s)$, and $|S|\leq O\left(\frac{\varepsilon}{\sigma\log n}\right)$.
Due to Fact~\ref{fac:laplace_bound} and union bound, with probability at least $1-O(1/n)$, $\max_{i\in[k]} |Y_i|\leq |S|\cdot \log\left(1+\sigma\cdot n/|S|\right)/\varepsilon_0\cdot \log n\leq O(\sigma^{-1}\log(1+\sigma\cdot n))$.
\end{proof}

\section{Differetially Private Sparse Approximate PPR}\label{sec:proof_of_dp_sparse_ppr}

\begin{theorem}\label{thm:dp_sparse_ppr}
Given source $s$, teleport probability $\alpha$, precision $\xi$, sensitivity bound $\sigma$ and DP parameter $\varepsilon$, there is an algorithm which is always $\varepsilon$-DP with respect to the input $n$-node graph $G$ and in addition outputs $\left(\xi, O_{\alpha,\varepsilon}(\sigma\ln n)\right)$-approximate PPR vector with $O_{\alpha,\varepsilon}(1/(\sigma\ln n))$ non-zero entries for $\p(s)$with probability at least $1-O(1/n)$ when $G$ has minimum degree at least $\Omega_{\alpha}(1/\sigma)$.
There is a family of (personalized) algorithms $\{\mathcal{A}_1,\mathcal{A}_2,\cdots,\mathcal{A}_n\}$ which is joint $\varepsilon$-DP with respect to the input $n$-node graph $G$ and in addition $\forall s\in V$, $\mathcal{A}_s(G)$ outputs $\left(\xi, O_{\alpha,\varepsilon}(\sigma\ln n)\right)$-approximate PPR vector with $O_{\alpha,\varepsilon}(1/(\sigma\ln n))$ non-zero entries for $\p(s)$ when $G$ has minimum degree at least $\Omega_{\alpha}(\sqrt{1/\sigma})$ with probability at least $1-O(1/n)$.
\end{theorem}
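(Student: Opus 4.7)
The plan is to instantiate exactly the two-stage pipeline already used inside Algorithm~\ref{alg:dp_instantembedding_via_sparse_ppr} but stop before the embedding step and add Laplace noise on the surviving coordinates instead. Concretely, split the budget as $\varepsilon_1=\varepsilon_2=\varepsilon/2$ and run: (i) $\hat p \gets \textsc{PushFlowCap}(G,s,\alpha,\xi,\sigma,\tp)$ with $\tp \in \{\joint,\njoint\}$ depending on the flavor desired; (ii) $S \gets \textsc{DPSparsification}(\hat p,\sigma,\varepsilon_1,\gamma)$ with threshold $\gamma = 3\sigma\ln(n)/\varepsilon_1$; (iii) for $i\in S$ output $p_i=\hat p_i + Y_i$ with $Y_i\sim\lap(\sigma/\varepsilon_2)$ i.i.d., and $p_i=0$ for $i\notin S$.

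For privacy, Theorem~\ref{thm:sensitivity_capped_pushflow} gives that $\hat p$ has (joint or non-joint) $\ell_1$ sensitivity at most $\sigma$. Step (ii) is $\varepsilon_1$-DP (resp.\ joint $\varepsilon_1$-DP) by Lemma~\ref{lem:eps_dp_sparsification}. Conditioned on any realization of $S$, the coordinate-restricted vector $\hat p|_S$ still has $\ell_1$ sensitivity at most $\sigma$, so adding $|S|$ independent $\lap(\sigma/\varepsilon_2)$ coordinates in step (iii) is $\varepsilon_2$-DP by the Laplace mechanism (Theorem~\ref{thm:laplace_mechanism}). The basic composition theorem (Theorem~\ref{thm:adv_comp}) then certifies the full pipeline as $\varepsilon$-DP (resp.\ joint $\varepsilon$-DP). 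Crucially, this privacy claim uses nothing about the minimum degree of $G$.

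For utility, assume $G$ has minimum degree $\Omega_\alpha(1/\sigma)$ in the \njoint case (resp.\ $\Omega_\alpha(\sqrt{1/\sigma})$ in the \joint case). Lemma~\ref{lem:correctness_cap_ppr} then says $\hat p$ coincides with the output of \textsc{PushFlow} and is therefore a genuine $\xi$-approximate PPR, i.e.\ $\hat p = \p(s-r)$ with $r\ge 0$ and $\|r\|_\infty\le \xi$. Applying Lemma~\ref{lem:sparsity_dp_sparsifiction} to $\hat p$ (whose $\ell_1$ mass is at most $1$) with our choice of $\gamma$ gives, with probability $\ge 1-1/n$, $|S|\le 3/\gamma = O_{\alpha,\varepsilon}(1/(\sigma\ln n))$ and $\hat p_i < 2\gamma$ for all $i\notin S$. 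Meanwhile, by Fact~\ref{fac:laplace_bound} and a union bound over the (at most $n$) coordinates in $S$, $\max_{i\in S}|Y_i| \le (\sigma/\varepsilon_2)\ln(n^2) = O(\sigma\ln n/\varepsilon)$ with probability $\ge 1-1/n$. Combining these two bounds gives $\|p-\hat p\|_\infty \le \max(2\gamma,\max_{i\in S}|Y_i|) = O_{\alpha,\varepsilon}(\sigma\ln n)$, and since $\hat p=\p(s-r)$ this is exactly the definition of a $(\xi,O_{\alpha,\varepsilon}(\sigma\ln n))$-approximate PPR vector. The output sparsity bound follows directly from the bound on $|S|$.

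The main technical subtlety I expect is step (iii): one must observe that \emph{conditional on $S$}, the algorithm's sensitivity remains $\sigma$ rather than $|S|\cdot\sigma$, because $S$ only selects coordinates and does not amplify $\ell_1$ differences; this is what lets us calibrate Laplace noise to $\sigma/\varepsilon_2$ instead of a quantity that grows with $|S|$. Everything else — the degree-dependent correctness of \textsc{PushFlowCap}, the sparsification guarantees, and the union-bounded tail of Laplace variables — is a direct application of results stated earlier in the paper, so no new calculation is needed. The joint-DP variant is identical once we substitute the joint sensitivity bound from Theorem~\ref{thm:sensitivity_capped_pushflow} and the corresponding minimum-degree condition from Lemma~\ref{lem:correctness_cap_ppr}.
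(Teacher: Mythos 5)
Your proposal is correct and follows essentially the same route as the paper's: the same two-stage algorithm (\textsc{PushFlowCap} then \textsc{DPSparsification} then coordinate-wise Laplace noise on $S$), the same $\varepsilon/2$ budget split certified via Lemma~\ref{lem:eps_dp_sparsification}, the composition theorem, and the observation that conditioning on $S$ leaves the $\ell_1$ sensitivity at $\sigma$, and the same utility argument from Lemma~\ref{lem:correctness_cap_ppr}, Lemma~\ref{lem:sparsity_dp_sparsifiction}, and Fact~\ref{fac:laplace_bound}. The ``subtlety'' you flag about conditional sensitivity is exactly the point the paper makes, so there is no meaningful divergence.
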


The algorithm that outputs the $\varepsilon$-DP sparse approximate PPR is given in Algorithm~\ref{alg:dp_sparse_ppr}.
\begin{algorithm}[h]
	\small
	\begin{algorithmic}[1]\caption{\textsc{DPSparsePPR}$(G,s,\alpha,\xi,\sigma,\varepsilon,\joint/\njoint)$}\label{alg:dp_sparse_ppr}
	\STATE \textbf{Input:} Graph $G=(V,E)$, source $s\in V$, teleport probability $\alpha$, precision $\xi$, sensitivity bound $\sigma$, DP parameter $\varepsilon$, and $\tp\in \{\joint,\njoint\}$ indicating whether joint $\varepsilon$-DP or $\varepsilon$-DP is considered.
	\STATE \textbf{Output:} $\varepsilon$-DP or joint $\varepsilon$-DP approximate PPR vector for $\p(s)$.
	\STATE $\varepsilon_0\gets \varepsilon/2$.
    \STATE $\hat{p}\gets \textsc{PushFlowCap}(G,s,\alpha,\xi,\sigma,\tp)$.
    \STATE $S\gets \textsc{DPSparsification}\left(\hat{p},\sigma,\varepsilon_0,\frac{3\sigma}{\varepsilon_0}\ln n\right)$.
    \STATE For each $i\in S$, let $Y_i$ be drawn independently from $\lap\left(\sigma/\varepsilon_0\right)$
    \STATE Construct $p$ such that $p_i=\hat{p}_i$ for $i\in S$. Let $p_i=Y_i=0$ for $i\in[n]\setminus S$.
    \STATE Output $p+(Y_0,Y_1,\cdots,Y_n)$.
	\end{algorithmic}
\end{algorithm}
\begin{proof}[Proof of Theorem~\ref{thm:dp_sparse_ppr}]
According to Theorem~\ref{thm:sensitivity_capped_pushflow}, the (joint) sensitivity of $\hat{p}$ is $\sigma$.
According to Lemma~\ref{lem:eps_dp_sparsification}, set $S$ is $\varepsilon_0$-DP.
For any fixed set $S$, the sensitivity of $p$ is always bounded by the sensitivity of $\hat{p}$.
Therefore, given $S$, $p+(Y_0,Y_1,\cdots,Y_n)$ is $\varepsilon_0$-DP due to Laplace mechanism (Theorem~\ref{thm:laplace_mechanism}).
According to the composition theorem, Theorem~\ref{thm:adv_comp}, the final output is $(\varepsilon_0+\varepsilon_0)$-DP which is $\varepsilon$-DP.
According to Lemma~\ref{lem:sparsity_dp_sparsifiction}, with probability at least $1-1/n$, the number of non-zero entries of the output is $O(\varepsilon/(\sigma\ln n))$.

Next, let us consider the accuracy of the output. 
Suppose joint $\varepsilon$-DP is considered.
If $G$ has minimum degree at least $\Omega_{\alpha}(\sqrt{1/\sigma})$,
according to Lemma~\ref{lem:correctness_cap_ppr}, $\hat{p}$ is a $\xi$-approximate PPR vector for $\p(s)$.
Suppose $\varepsilon$-DP is considered.
If $G$ has minimum degree at least $\Omega_{\alpha}(1/\sigma)$,
according to Lemma~\ref{lem:correctness_cap_ppr}, $\hat{p}$ is a $\xi$-approximate PPR vector for $\p(s)$.

Notice that $\gamma=\frac{3\sigma}{\varepsilon_0}\cdot \ln n$.
According to Lemma~\ref{lem:sparsity_dp_sparsifiction}, $\|p-\hat{p}\|_{\infty}\leq O(\gamma)$.
According to Fact~\ref{fac:laplace_bound}, with probability at least $1-1/n$, $\max_{i\in[n]} |Y_i|\leq O(\gamma)$.
Thus, with probability at least $1-O(1/n)$, the output is a $(\xi,O(\gamma))$-approximate PPR vector for $\p(s)$.
\end{proof}


\section{Additional Experimental Results}
\label{app:exp}
\paragraph{Additional baselines} We also considered DPNE~\cite{xu2018dpne}  in our evaluation as a potential baseline. However, the DPNE algorithm as described ~\cite{xu2018dpne} is not DP in the setting of edge-DP or joint-DP unless certain assumptions on the input graph are made.\footnote{This has been confirmed in a personal communication with the authors.} 
Since the paper does not provide edge-DP or joint-edge DP on \emph{arbitrary} inputs, like our paper, we omit it from our empirical evaluation.


\paragraph{Details on the datasets}
We provide more detailed description of the datasets.
PPI is a protein-protein interaction dataset, where labels represent hallmark gene sets of specific biological states. Blogcatalog is a social networks of bloggers, where labels are self-identified topics of their blogs. Flickr is a photo-based social network, where labels represent self-identified interests of users and edges represent messages between users.

{
\color{red}

}
\fi 

\end{document}